\title{Data Assimilation in Large-Prandtl Rayleigh-B\'enard Convection
  from Thermal Measurements}
\author{A. Farhat, N. E. Glatt-Holtz, V. R. Martinez, S. A. McQuarrie, 
  and J. P. Whitehead\\
  \scriptsize{emails: afarhat@fsu.edu, negh@tulane.edu, 
    vrmartinez@hunter.cuny.edu,}\\
  \scriptsize{shanemcq@utexas.edu, whitehead@mathematics.byu.edu} }
\date{}
\definecolor{Red}{rgb}{0.7,0,0.1}
\definecolor{Green}{rgb}{0,0.7,0}
\numberwithin{equation}{section}
\newcommand{\til}[1]{{\tilde{#1}}}
\newcommand{\Sob}[2]{\lVert#1\rVert_{#2}}
\newtheorem{Thm}{Theorem}[section]
\newtheorem{Lem}{Lemma}[section]
\newtheorem{Rmk}{Remark}[section]
\newtheorem*{Thm*}{Theorem}
\theoremstyle{remark}
\newcommand{\imb}{\hookrightarrow}
\newcommand{\lam}{\lambda}
\DeclareMathOperator{\raN}{Ra}
\DeclareMathOperator{\prN}{Pr}
\newcommand{\lb}{\langle}
\newcommand{\rb}{\rangle}
\newcommand{\bfU}{\mathbf{u}}
\newcommand{\bfW}{\mathbf{w}}
\newcommand{\bfX}{\mathbf{x}}
\newcommand{\T}{\tilde{T}}
\newcommand{\bfUT}{\tilde{\mathbf{u}}}
\newcommand{\bff}{\mathbf{f}}
\newcommand{\ddt}{\frac{d}{dt}}
\newcommand{\bfe}{\mathbf{e}}
\newcommand{\bfx}{\mathbf{x}}
\newcommand{\RR}{\mathbb{R}}
\newcommand{\Om}{\Omega}
\newcommand{\tht}{\theta}
\newcommand{\kap}{\kappa}
\newcommand{\De}{\Delta}
\newcommand{\del}{\nabla}
\newcommand{\bdy}{\partial}
\newcommand{\req}[1]{\eqref{#1}}
\newcommand{\od}[2]{\frac{d #1}{d #2}}
\begin{document}
\markboth{}{}

\maketitle

\begin{abstract}

  This work applies a continuous data assimilation scheme---a
  particular framework for reconciling sparse and potentially noisy
  observations to a mathematical model---to Rayleigh-B\'enard
  convection at infinite or large Prandtl numbers using only the
  temperature field as observables.  These Prandtl numbers are applicable to
  the earth's mantle and to gases under high pressure.  We rigorously
  identify conditions that guarantee synchronization between the
  observed system and the model, then confirm the applicability of
  these results via numerical simulations. Our numerical experiments
  show that the analytically derived conditions for synchronization
  are far from sharp; that is, synchronization often occurs even when
  the conditions of our theorems are not met.  We also develop
  estimates on the convergence of an infinite Prandtl model to a large
  (but finite) Prandtl number generated set of observations.  Numerical
  simulations in this hybrid setting indicate that the
  mathematically rigorous results are accurate, but of
  practical interest only for extremely large Prandtl numbers.
\end{abstract}

{\noindent \small
  {\it \bf Keywords: Data Assimilation, Rayleigh-B\'enard Convection, Large Prandtl Limit} \\
  {\it \bf MSC2010:  76E06, 62M20, 35Q35} }

\setcounter{tocdepth}{1}
\tableofcontents

\newpage

\section{Introduction} 

In order to make accurate predictions, numerical models for geophysical processes require establishing accurate initial conditions. 
Data assimilation is used to estimate weather or ocean (or any other geophysical) variables by incorporating the real world data into the mathematical system to obtain an accurate initialization. One of the classical methods of data assimilation, see, e.g., \cite{HokeAnthes1976, Daley1991, LorencBellMacpherson1991, LyneSwinbankBirch1982, VerronHolland1989, StaufferSeaman1990, ZouNavonDimet1992, StaufferBao1993, VidaradDimetPiacentini2003, AurouxBlum2008, BlomkerLawStuartZygalakis13, AzouaniOlsonTiti14}, is to insert observational measurements directly into a model as the latter is being integrated in time (also known as nudging or newtonian relaxation). There is a significant amount of recent
literature concerning the mathematically rigorous analysis of nudging
algorithms for data assimilation developed for hydrodynamic equations
with a particular focus on weather and climate systems. Recently, a
nudging scheme, known as 3DVAR, was studied in
\cite{BlomkerLawStuartZygalakis13} in the case where observables are
given as noisy Fourier modes, and in \cite{AzouaniOlsonTiti14}, which
successfully accommodates a larger class of observables that, in
particular, includes the more physically relevant cases of nodal
values and volume elements; see also \cite{BessaihOlsonTiti15}, where
observational error is accounted for.  In these articles, rigorous
proofs are obtained for the synchronization of the approximating
signal with the true signal that corresponds to the observations,
using the two-dimensional (2D) incompressible Navier-Stokes equations
(NSE) as a paradigm.

The data assimilation algorithm analyzed in \cite{BlomkerLawStuartZygalakis13,
  AzouaniOlsonTiti14} can be described as follows: suppose that $u(t)$
represents a solution of some dynamical system governed by an
evolution equation of the type
\begin{align}\label{dissipative}
\od{u}{t} = F(u),
\end{align}
where the initial state of the system, $u(0)= u_{0}$, \textit{is
  unknown}.  We would like to accurately track this solution $u(t)$ as
$t$ increases notwithstanding our uncertainty in $u_0$.  Let
$I_h(u(t))$ represent an interpolant operator based on the
observations of the system at a coarse spatial resolution of size $h$,
for $t\in [ 0,T ]$. We then construct a solution $v(t)$ from the
observations that satisfies the equations
\begin{subequations}\label{du}
\begin{align}
&\od{v}{t} = F(v) - \mu (I_h(v)- I_h(u)), \\
&v(0)= v_{0},
\end{align}
\end{subequations}
where $\mu>0$ is a relaxation (nudging) parameter and $v_{0}$ can be
prescribed as an arbitrary initial condition.  We then take $v(t)$ as
prediction of $u(t)$ which we anticipate becomes more accurate as $t$ (and
therefore the amount of observed data $I_h(u(t))$) increases.

The algorithm designated by \eqref{du} was designed to work for
dissipative dynamical systems of the form \eqref{dissipative} that are
known to have global-in-time solutions, a finite-dimensional global
attractor, as well as a finite set of determining parameters (see,
e.g., \cite{FoiasProdi1967, Foias_Temam, Foias_Titi, JonesTiti92a,
  JonesTiti92b, CockburnJonesTiti97, Holst_Titi, F_M_R_T} and
references therein).  Typically in these settings, following the ideas
in \cite{FoiasProdi1967}, lower bounds on $\mu>0$ and upper bounds on
$h>0$ can be derived such that the approximate solution $v(t)$
converges to the reference solution $u(t)$ as $t \to \infty$.  This
was initially demonstrated for the 2D Navier-Stokes equations in
\cite{BlomkerLawStuartZygalakis13, AzouaniOlsonTiti14}.

Numerous further studies, both analytical and numerical, have
been carried out for the algorithm \eqref{du}, illustrating its broad
scope of applicability.  For instance, the nudging approach has been
validated for models including the 2D magnetohydrodynamic system
\cite{BiswasHudsonLariosPei17}, the 2D surface quasi-geostrophic
equation \cite{JollyMartinezTiti17}, three-dimensional (3D)
Brinkman-Forchheimer-Extended Darcy model
\cite{MarkowichTitiTrabelsi16}, and 3D simplified Bardina model
\cite{AlbanezBenvenutti18}.  The practically and physically relevant
scenarios of discrete-time and time-averaged observables was studied
in \cite{FoiasMondainiTiti16, BlocherMartinezOlson18,
  JollyMartinezOlsonTiti18}; more recently, it was shown in
\cite{BiswasFoiasMondainiTiti18} that this nudging algorithm is
capable of synchronizing the statistics propagated by the flow as they
are observed only on a coarse-mesh scale; the efficacy of this
algorithm for assimilating actual data sampled from a regional domain
encompassing most of Northern Africa and the Middle East was recently
tested in \cite{DesamsettiDasariLangodanTitiKnioHoteit18}.  We refer
the reader to \cite{Daley1991} for a summary on the use of data
assimilation in practical forecasting and \cite{Kalnay03} for a
comprehensive text on numerical weather prediction where nudging 
has been employed. 

Regarding related numerical studies, \cite{GeshoOlsonTiti}
demonstrated in the case of the 2D NSE that the number of observables
required for synchronization using \eqref{du} is much lower in
practice than what has been deemed sufficient by the rigorous
analysis.  In the setting of the 2D RB system, numerical studies were
carried out in \cite{AltafTitiGebraelKnioZhaoMcCabeHoteit15}, and then
in \cite{FarhatJohnstonJollyTiti18} for nearly turbulent flows using
vorticity and local circulation measurements.  We emphasize that the
numerical experiments carried out in the present article are for
moderately turbulent flows whose dynamics are significantly more
complex than the regime of two-cell convection rolls that
\cite{AltafTitiGebraelKnioZhaoMcCabeHoteit15} was restricted to.
Moreover our studies are carried out to a similar high degree of
numerical precision as found in \cite{FarhatJohnstonJollyTiti18}.  We
refer the reader to \cite{AlbanezLopesTiti16, FarhatLunasinTiti17,
  LeoniMazzinoBiferale18, CarlsonHudsonLarios18} for various other
studies in the context of turbulent flows such as how one can leverage
the nudging scheme to infer unknown parameters of the flow.  In
\cite{BiswasMartinez17, IbdahMondainiTiti18, MondainiTiti18}
analytical studies on the various modes of synchronization of the
algorithm \eqref{du} and on certain variants on its numerical
discretization were carried out.

The earth system is heated from within and cooled by the atmosphere or ocean at the earth's surface. 
On geological time scales the mantle's motion can be modeled as a fluid. 
The big difference between the temperature of the top mantle and the bottom mantle 
is a major source of the convective motion (fluid motion driven by temperature difference). The full compressible,
temperature-dependent viscous equations of motion that ostensibly
describe flow in the mantle \cite{turcotteschubert2014} are currently
beyond the reach of a rigorous mathematical analysis, but
a first order approximation to this system is adequately described by
taking the infinite Prandtl limit
\cite{Wang2004a,Wang2004b,XWang05,Wang2007,Wang2008a,Wang2008b,foldesglattholtzrichards2015}
of the Rayleigh-B\'enard (RB) system first described in \cite{Ra1916}
by Lord Rayleigh. We recall that the Prandtl number represents
the ratio of the kinematic viscosity to the thermal diffusivity.
Since the original formulation of the minimal mathematical model in
\cite{Ra1916}, extensive research has sought to quantify its dynamical
evolution, \cite{Lo1963,Ah1974,AhBe1978,BoBrAhCa1991}, and resultant
large spatial and long temporal scale impact of convective flow, see
\cite{AhlersGrossmannLohse2009,LoXi2010} for example.  Despite the
seeming simplicity of the RB system, there remains open questions
regarding the exact nature of the convective heat transport, and the
impact and nature of boundary layers at physically relevant values
(see e.g. \cite{AhlersGrossmannLohse2009}).  To further complicate
matters, mantle convection is far more nuanced than Rayleigh-B\'enard
convection, having several other unanswered questions, in addition to
the well-known open problems in the latter setting.  Unlike the low
Prandtl number setting, experimental investigations of mantle
convection are not practical, so numerical simulations provide one of
the only avenues to investigate these issues.  Of fundamental concern
in such simulations is the dependence of the simulation on the initial
condition and/or true physical setting, which would ideally be
accompanied by physical observations. The collection of
observational data from the mantle is an onerous inverse problem that
obfuscates much of the desired resolution in time and space
\cite{turcotteschubert2014}, both in the sense of physical
accessibility, and due to the relevant time scales involved.  Indeed,
the evolution of the mantle is on millenial time scales.  Despite
remarkable advances in imaging technologies, observations of the mantle are sparse and prone to
noise and are insufficient to determine the mantle's state. Thus, 
the development of the advanced real-time prediction systems that are capable of 
depicting and predicting the mantle's state is necessary to gain insight into the dynamics in the earth's interior.

Access to observations from earth's mantle is limited. Geophysicists have decent observations at the surface (top layer) of plate velocities and heat flow distribution and have probes of mantle temperature where volcanism occurs. Since the motion of the tectonic plates is very slow (the relative movement of the plates typically ranges from zero to 100 mm annually), we can assume that the top plate is stationary and with fixed temperature.
Thus, a realistic forecast model for the dynamics of earth's mantle will employ {\it sparse thermal observations only}. In the context of atmospheric and oceanic physics, data assimilation algorithms where some state variable observations are not available as an input, have been studied in \cite{CharneyHalemJastrow69, errico1987predictability, Ghil1977, Ghil-Halem-Atlas1978} for simplified numerical forecast models. Particularly related to the study carried out in this article, it was
shown in the case of the 2D NSE \cite{FarhatLunasinTiti16a} and the 2D
RB system in \cite{FarhatLunasinTiti16d} that measurements on just a
single component of velocity is sufficient to obtain synchronization.
 Charney's question in \cite{CharneyHalemJastrow69, Ghil-Halem-Atlas1978, Ghil1977} asks whether temperature observations are enough to determine the entire dynamical state of the system. In \cite{Ghil1977}, an analytical argument suggested that Charney's conjecture is correct, in particular, for a shallow water model. Further numerical testing in \cite{Ghil-Halem-Atlas1978} affirmed that it is not certain whether assimilation with temperature data alone will yield initial states of arbitrary accuracy. The authors in \cite{AltafTitiGebraelKnioZhaoMcCabeHoteit15} concluded that assimilation using coarse temperature measurements only will not always recover the true state of the full system. It was observed that the convergence to the true state using temperature measurements only is actually sensitive to the amount of noise in the measured data as well as to the spacing (the sparsity of the collected data) and the time-frequency of such measured temperature data.
Rigorous justification for Charney's Conjecture was provided in
\cite{FarhatLunasinTiti16c} in the case of the 3D Planetary
Geostrophic model.  Earlier, for the specific setting of 3D convection
in a porous medium, where inertial effects can be ignored in the fluid
velocity, it was shown in \cite{FarhatLunasinTiti16b} that temperature
measurements alone suffice to determine the velocity field.  By
comparison, the thrust of the analysis performed here is to establish
the conclusions analogous to \cite{FarhatLunasinTiti16b} while accounting
for these inertial effects within the regime of a finite but large Prandtl
number.

We consider the nudging
approach both analytically and through numerical experiments to
explore the range of applicability of the technique in this
geophysically interesting context of large Prandtl convective systems.
Ultimately, we accomplish the following:
\begin{enumerate}
\item We develop a nudging data assimilation scheme for both large and
  infinite Prandtl number Rayleigh-B\'enard convection in the
  traditional simplified three-dimensional box geometry
  (cf. \eqref{nd:Bou},\eqref{nd:bc} and \eqref{system:data-infinite} below)
  with observations in the temperature field only. In
  Section~\ref{sec:RB} we formally introduce the governing equations for
  the 3D RB system and then in Section~ \ref{sec:preliminaries}, we
  provide the mathematical framework within which our analysis is
  performed, as well as the relevant well-posedness results for the 3D
  RB system.  We then establish rigorous estimates on the convergence
  rates for the simpler case of $\Pr=\infty$ in Section
  \ref{section:infinite-prandtl-assimilation}.  The case of
  large, but finite $\Pr$ is addressed in Section
  \ref{section:finite-prandtl-assimilation}.

\item Perform high-resolution direct numerical simulations (DNS) on
  the \textit{two-dimensional} version of this problem for moderately
  turbulent flows, in an effort to shed some light on the practical
  applicability of the rigorous estimates. In particular, we probe the
  values of the relaxation parameter and the number of required modes
  for the nudging scheme to converge.  This is done in Sections
  \ref{section:numerical-results-infinite} and
  \ref{section:numerical-results-finite}, immediately after their
  respective mathematical analysis.

\item Consider a practical scenario of `model error', in which the
  assimilated variables are nudged ``incorrectly." Specifically, we
  assume that the modeling system corresponds to the infinite Prandtl
  system nudged by \textit{data corresponding to a finite Prandlt
    system} \req{nd:Bou}, \req{nd:bc}. This situation is studied both
  analytically and numerically in Section
  \ref{section:hybrid-prandtl-assimilation}.

\end{enumerate}
We note that the choice of finitely many Fourier modes as the
manifestation of our observables is made for ease of both exposition
and numerical implementation. We reserve establishing estimates on the
convergence rates for more general observables to a subsequent study.

\section{The Rayleigh-B\'enard System and Nudging Equation}
\label{sec:RB}

This initial section recalls the Rayleigh-B\'enard (RB) system and its
non-dimensional formulation.  We then present the precise form of the
nudging algorithm which we will study in the sequel.

The Rayleigh-B\'enard system for convection originates from the
Boussinesq equations for an incompressible fluid with appropriate
boundary conditions. The Boussinesq system over a $d$-dimensional
domain, where $d=2,3$, $\Omega = [0,\tilde{L}]^{d-1} \times [0,h]$, is given by
\begin{align}\label{eq:Bou}
\partial_t \bfU + (\bfU \cdot \nabla) \bfU = \nu\Delta \bfU 
-  \nabla p  + \alpha g \bfe_d T , \quad \nabla \cdot \bfU = 0,
  \quad \partial_t T +( \bfU \cdot \nabla) T - \kap\Delta T=0,
\end{align}
where $\bfU = (u_1,\dots,u_d)$ is the velocity vector field, $p$ is the
scalar pressure field, and $T$ denotes the temperature of a buoyancy
driven fluid. The parameter $\nu$ denotes the kinematic viscosity of
the fluid, $\kap$ its thermal diffusivity, $\alpha$ the thermal
expansion coefficient, $g$ denotes the constant gravitational force
and $\bfe_d$ is a constant vector anti-parallel to the gravitational force. To model convection,
\eqref{eq:Bou} is then supplemented by the boundary conditions
\begin{align}   \label{eq:bc}
   \bfU|_{x_d = 0} = \bfU|_{x_d = h} = \mathbf{0}, 
  \quad 
  T|_{x_d = 0} = \delta T, \;  T|_{x_d = h} = 0, 
  \quad
  \bfU, T \textrm{ are $L$-periodic in }x_1,x_{d-1},
\end{align}
where $\delta T$ is a fixed constant determined by the (relative)
strength of the bottom heating. The relations \req{eq:Bou},
\req{eq:bc} together constitute the 3D RB system for convection.  We
note that variations on these boundary conditions are applicable to
the earth's mantle, but as our focus is on the convergence of the data
assimilation scheme, we assume that such variations have secondary
effects.

\subsubsection*{Non-dimensionalized variables}

As is customary, we work with non-dimensionalized variables.  The
system \eqref{eq:Bou} is rescaled using $h$ as a length scale,
$\delta T$ as the temperature scale, and the diffusive scale
$\frac{\kappa}{h^2}$ as the time scale.  The relevant non-dimensional
physical parameters for the system are the Prandtl number, $\prN$, and
Rayleigh number, $\raN$, which are defined as
\begin{align}\label{pr:ra}
  \prN:=\frac{\nu}{\kap},
    \qquad
    \qquad
  \raN:=\frac{\alpha g(\delta T)h^3}{\nu \kap}.
\end{align}
This leads to non-dimensionalized variables over the rescaled domain
$\Om'=[0,L]^{d-1}\times[0,1]$, $d=2,3$, which satisfy
\begin{align}\label{nd:Bou}
  \begin{split}
    \frac{1}{\prN}\left[\partial_{t'} \bfU' + (\bfU' \cdot \nabla')\bfU'\right] 
    -\Delta' \bfU'
        &=  -  \nabla' p'  + \raN\bfe_d T' , 
        \quad \nabla' \cdot \bfU' = 0,
        \quad \bfU'(\bfx',0) = \bfU'_0(\bfx')\\
        \partial_{t'} T' + \bfU' \cdot \nabla'T' -\Delta' T'&= 0,
        \quad T'(\bfx',0) = T'_0(\bfx')
   \end{split}
\end{align}
with the boundary conditions
\begin{align}\label{nd:bc}
  \bfU'|_{x_d' = 0} = \bfU'|_{x_d' = 1} = \mathbf{0}, 
  \quad T'|_{x_d' = 0} = 1, \;  T'|_{x_d' = 1} = 0, 
  \quad 
  \bfU', T' \textrm{ are $L$-periodic in }x_1', x_{d-1}',
\end{align}
For notational simplicity, we will drop the $'$ in all that follows.  

As previously alluded to, the physical setting of interest in this
article is the earth's mantle where the Prandtl number is
\emph{large}, namely, on the order of $10^{25}$.  Upon formally
setting $\prN = \infty$ in the system \eqref{nd:Bou}--\eqref{nd:bc},
one arrives at
\begin{gather}
  \label{system:data-infinite}
  \begin{gathered}
    -\Delta \bfU =-\nabla p  + \raN \bfe_d T, 
    \quad \nabla \cdot \bfU = 0,\\
    \partial_t T + \bfU \cdot \nabla T - \Delta T = 0,
    \quad T(\bfX,0)=T_0(\bfX),\\
    \bfU|_{x_d = 0} = \bfU|_{x_d = 1} = \mathbf{0}, 
    \quad T|_{x_d = 0} = 1,\; T|_{x_d = 1} = 0, 
    \quad
    \bfU, T \textrm{ are $L$-periodic in }x_1, x_{d-1}.
  \end{gathered}
\end{gather} 
We note that the initial velocity $\bfU(\bfX,0)$ is then determined by
$T_0$ and the corresponding momentum equations. Although there are
several additional physical effects relevant to mantle convection that
are omitted from the Rayleigh-B\'enard model considered here, we
consider \req{system:data-infinite} to be an appropriate ``zeroth
order'' representation of mantle convection; it provides the starting
point and test model for mantle convection simulations
(cf. \cite{Blankenbachetal1989}).  Although we anticipate eventually
extending the results of the current investigation to more realistic
models of mantle convection, we believe a more in-depth understanding
of the problem at hand is a necessary first step.

\subsubsection*{Nudging setup}

The idea, following \cite{BlomkerLawStuartZygalakis13,
  AzouaniOlsonTiti14} and sketched in the introduction, is to nudge
the assimilated system as in \req{du} with a projection of the `truth'
that represents the exactly realizable observations of the original
system.  More precisely, the nudging is accomplished by introducing an
affine feedback control term to the original `forecast' model
\req{dissipative}, whose purpose is to enforce the asymptotic
convergence of the solution of the assimilated system \req{du} towards
that of the original system \req{dissipative}, \textit{but only on the
  scales at which the observations are made}; it is this `relaxed'
imposition that ensures the practicality of the nudging scheme. For
this article, our `truth' is assumed to be represented by
\req{nd:Bou}, \req{nd:bc} or by \eqref{system:data-infinite}.

Let $(\bfU,T)$ satisfy \req{nd:Bou}, \req{nd:bc} over
$\Om=[0,L]^2\times[0,1]$, from which we have obtained partial
observations in the form of finitely many Fourier coefficients
corresponding to wave-numbers $|k|\leq N$, for some integer $N>0$.  Let $(\tilde{\bfU},\T)$ denote the assimilated or modeled system
variables, which satisfies
\begin{gather}\label{eq:B:DA}
  \begin{gathered}
   \frac{1}{\prN}[\partial_{t} \bfUT + (\bfUT \cdot \nabla) \bfUT ] -\Delta \bfUT 
   =-\nabla\tilde{p}  + \raN \bfe_d \T , 
   \quad \nabla \cdot \bfUT = 0,
   \quad \bfUT(\bf{x},0)=\bfUT_0(\bfx)\\
  \partial_t \T + \bfUT \cdot \nabla \T - \Delta \T= - \mu P_N (\T - T),
  \quad \T(\bfx,0)=\T_0(\bfx),\\
   \bfUT|_{x_d = 0} = \bfUT|_{x_d = 1} = \mathbf{0}, 
   \quad \T|_{x_d = 0} = 1,\; \T|_{x_d = 1} = 0, \quad
   \bfUT, \T \textrm{ are $L$-periodic in }x_1,x_{d-1},
 \end{gathered}
\end{gather}
where $P_N$ denotes the projection onto Fourier wave-numbers
$|k|\leq N$ (see \req{projections} below). Its infinite Prandtl
counterpart is given by
\begin{gather}
  \begin{gathered} 
    -\Delta \bfUT =-\nabla\tilde{p}  + \raN \bfe_d \T, 
    \quad \nabla \cdot \bfUT = 0,\\
    \partial_t \T + \bfUT \cdot \nabla \T - \Delta \T = - \mu P_N (\T - T),
    \quad \T(\bfx,0)=\T_0(\bfx),\\
    \bfUT|_{x_d = 0} = \bfUT|_{x_d = 1} = \mathbf{0}, 
    \quad \T|_{x_d = 0} = 1,\; \T|_{x_d = 1} = 0, \quad
    \bfUT, \T\ \textrm{are $L$-periodic in }x_1,x_{d-1},
\end{gathered}\label{system:nudge-infinite}
\end{gather}
where $(\bfU,T)$ comes from either \req{nd:Bou}, \req{nd:bc} or \req{system:data-infinite}.

One of
the basic goals of this paper is to show that
$\T-T,\tilde{p}-p\rightarrow 0$ and $\bfUT-\bfU\rightarrow\mathbf{0}$ as
$t\rightarrow\infty$ in the appropriate space for specific conditions
on $\mu$ and the number of projected modes $N$ relative to $\raN$ and
$\prN$.  This indicates that for specified Rayleigh $\raN$ and Prandtl
$\prN$ numbers one can determine a sufficiently large number of modes
and a sufficiently large nudging parameter $\mu$ to ensure that the
assimilated system $(\tilde{\bfU},\tilde{T})$ will asymptotically
match the true system.

\section{Mathematical Background} 
\label{sec:preliminaries}
For the sake of completeness, this section presents some preliminary
material and notation commonly used in the mathematical study of
hydrodynamic systems, in particular in the study of the NSE
    and the Euler equations for incompressible fluids.  For more
detailed discussion on these topics, we refer the reader to, e.g.,
\cite{ConstantinFoias88,Temam1997}.

Let $\Omega = [0,L]^{d-1}\times[0,1]$, where $d=2,3$ and 
we denote the spatial variable $\bfX = (x_1, \ldots, x_d)$.
We consider the function spaces
\begin{align}
  \mathscr{F}&:=\{v\in C^\infty(\Om):L\text{-periodic in}\ x_j, j=1,d-1,\ 
               \text{compactly supported in}\ x_d\},
    \label{eq:smooth:test}\\
  \mathscr{F}^d_{\sigma}&:=\{\mathbf{v}\in \mathscr{F}^d:\del\cdotp\mathbf{v}=0\},
    \notag\\
  H&:=\overline{\mathscr{F}}^{L^2},
     \quad \mathcal{H} :=\overline{\mathscr{F}_\sigma^d}^{L^2}\label{eq:H-def}\\
  {V}&:=\overline{\mathscr{F}}^{H^1},
     \quad\mathcal{V}:=\overline{\mathscr{F}^d_{\sigma}}^{H^1}
       \label{eq:V-def}\\
  {W}&:=\overline{\mathscr{F}}^{H^2},
       \quad \mathcal{W}:=\overline{\mathscr{F}^d_{\sigma}}^{H^2}\label{eq:W-def}
\end{align}
where $H^1(\Omega)$ and $H^2(\Omega)$ denote the classical Sobolev
class of first-order and second-order weakly differentiable functions
over $\Omega$, respectively. We
use the notation $X^{\times d}$ to denote the $d$-fold product of a
set $X$, and $\lb\cdot,\cdot\rb$ to denote the usual $L^2$ inner
product over $\Omega$,
\begin{align*}
  \lb{\bf{u}}, {\bf{v}}\rb = \sum_{i=1}^3  \int_\Omega u_i(\bfX)v_i(\bfX)\:d\bfX, 
  \qquad \lb f, g\rb = \int_\Omega f(\bfX)g(\bfX)\:d\bfX,
\end{align*}
for
${\bf u}= (u_1,\dots, u_d), {\bf v}= (v_1,\dots, v_d) \in
{\mathcal{H}}$,
and $f, g\in L^2(\Om)$. The inner product on $H^k(\Om)$,
$k=1,2,\dots$, is given by
\begin{align*}
		\lb f,g\rb_{H^k}:=\sum_{|\gamma|=0}^k\lb D^\gamma f,D^\gamma g\rb,
\end{align*}
where $\gamma=(\gamma_1,\dots,\gamma_d)$ is a multi-index and
$D^\gamma=(\bdy_{x_1}^{\gamma_1},\dots, \bdy_{x_d}^{\gamma_d})$. The
spaces $H,V,W$ are then endowed with a Hilbert space structure, whose
respective inner products are given by
\begin{align}\label{inn:prod}
  (f,g)_H:=\lb f,g\rb,\quad 
  (f,g)_V:=\lb\del f,\del g\rb,\quad 
  (f,g)_W:=(f,b)_V+\sum_{|{\gamma}|=2}\lb D^\gamma f,D^\gamma g\rb.
\end{align}
The spaces $\mathcal{H},\mathcal{V},\mathcal{W}$  have analogous Hilbert
space structures. We denote by $H',V',W'$ and
$\mathcal{H}',\mathcal{V}',\mathcal{W}'$ the dual spaces of $H,V,W$,
and $\mathcal{H},\mathcal{V},\mathcal{W}$, respectively. We then have
the following continuous injections:
\begin{align}\label{embed}
  \begin{split}
    &W\imb V\imb H\imb H'\imb V'\imb W',\\
    &\mathcal{W}\imb\mathcal{V}
      \imb\mathcal{H}\imb\mathcal{H}'\imb\mathcal{V}'\imb\mathcal{W}'.
  \end{split}
\end{align}
In what follows we will denote the $L^2(\Om)$ norm by $\| \cdot \|$.
For all other Banach spaces $X$, e.g. $L^p(\Om)$, for $p \not= 2$,
$ H^k(\Om)$ etc., we denote the associated norms explicitly as
$\| \cdot \|_X$.

Let 
\begin{align}\label{eq:eigenpairs}
\{(\lambda_{n},\phi_{n}({\bf x}))\}_{n=1}^\infty 
\end{align}
denote the orthonormal eigenpairs corresponding to the Laplace
operator $-\Delta$ on the domain $\Om$ supplemented with the mixed
horizontally periodic-vertically Dirichlet boundary condition as in
\eqref{eq:smooth:test}.  Then each $f\in H^2(\Omega)$ can be expressed
in terms of the eigenfunctions as
\begin{align*}
f(\bfX,t) = \sum_{n=1}^\infty f_{n}(t)\phi_n(\bfX),
\qquad
f_{n}(t) = \lb f(t),\phi_n\rb 
= \int_\Omega f(\bfX,t)\phi_n(\bfX)\:d\bfX,
\end{align*}
and the eigenfunctions satisfy the orthogonality relation
\begin{align*}
\langle\phi_i,\phi_j\rangle 
= \delta_{ij}
= \begin{cases}
1 & \textrm{if }i = j,\\
0 & \textrm{if }i \neq j.
\end{cases}
\end{align*}
For each $N\geq0$, define the projections $P_N, Q_N$ by
\begin{align}
  P_N(f) = \sum_{n=1}^N f_n\phi_n,
  \qquad
  Q_N(f) = [I - P_N](f) = \sum_{n=N+1}^\infty f_n\phi_n,
  \label{projections}
\end{align}
where $I$ is the identity operator.  In other words, $P_N$ is a
truncation of the eigenfunction expansion, and $Q_N$ is its orthogonal
complement.  The orthogonality of the eigenbasis yields the identities
\begin{align}
\label{eq:pnqn-vanishes}
\langle P_N(f), Q_N(f) \rangle
&= 0,
\\
\|P_N(f)\|^2 + \|Q_N(f)\|^2
&= \|f\|^2,
\label{eq:pn+qn=f}
\end{align}
for any $f\in H^2(\Omega)$.  

We next recall the following well-known a priori estimates for Stokes'
equations
\begin{gather}
        -\De\bfU+\del p=\bff,
        \label{eq:stokes:eq}\\
        \del\cdotp\bfU=0,
        \label{eq:stokes:Divfree}\\
        \bfU|_{x_3=0}=\bfU|_{x_3=1}=\mathbf{0}, \quad
        \bfU\textrm{ is $L$-periodic in }x_1\textrm{ and }x_2,
        \label{eq:stokes:BC}
\end{gather}
cf. \cite{ConstantinFoias88} and the drift-diffusion equation, \eqref{T:BCs}
where the advecting velocity field is divergence free
as in e.g. \cite{Temam1997,XWang05}.  We adapt these results here to our
notation and setting as follows
\begin{Lem}
\label{lem:stokes}
Let $d=2,3$ and $\bff\in H^{\times d}$. There exists a unique
$\bfU\in\mathcal{W}$ and (up to constants) $p\in V$ such that \eqref{eq:stokes:eq} is satisfied.
Moreover, there exists a constant $C = C(\Omega) > 0$ such that
\begin{align*}
  \|\bfU\|_{H^2}+ \|p\|_{H^1} \leq C\|\bff\|,
\end{align*}
\end{Lem}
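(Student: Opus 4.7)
The lemma is a standard $H^2$-regularity statement for the Stokes system on the slab $\Omega=[0,L]^{d-1}\times[0,1]$ with lateral periodicity and vertical no-slip, and the cleanest route is the classical three-step procedure: produce a weak solution in $\mathcal{V}$, recover the pressure via a de Rham/Helmholtz argument, and then bootstrap to $H^2$ using that the boundary is flat.

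\emph{Step 1: Weak solution and $H^1$ bound.} I would work on the divergence-free test space $\mathcal{V}$ defined in \eqref{eq:V-def} and look for $\bfU\in\mathcal{V}$ satisfying
\begin{equation*}
\langle \nabla \bfU,\nabla \bfV\rangle = \langle \bff,\bfV\rangle \quad \text{for all } \bfV\in\mathcal{V}.
\end{equation*}
Since $\bfV$ vanishes on $x_d\in\{0,1\}$ (and is periodic laterally), the Poincar\'e inequality makes the left-hand side a coercive, continuous, symmetric bilinear form on $\mathcal{V}$; the right-hand side is a bounded linear functional because $\bff\in H^{\times d}\subset \mathcal{V}'$. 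Lax--Milgram yields a unique $\bfU\in\mathcal{V}$, and testing against $\bfU$ itself gives the a priori estimate $\|\bfU\|_{H^1}\leq C\|\bff\|$.

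\emph{Step 2: Recovery of the pressure.} The variational identity means that the distribution $\bff+\De\bfU$ annihilates every divergence-free smooth test field with the stated boundary conditions. By the de Rham theorem (or equivalently, a Helmholtz decomposition argument adapted to the slab geometry), this forces the existence of a distribution $p$, unique up to an additive constant, such that $-\De\bfU+\nabla p=\bff$ holds in $\mathscr{D}'(\Omega)$. Normalizing $p$ to have mean zero and pairing the equation with test fields, one deduces $\nabla p\in H^{\times d}$, which combined with Poincar\'e (in the zero-mean class) gives $p\in V$ with $\|p\|_{H^1}\leq C\|\bff\|$ once the $H^2$ bound on $\bfU$ is in hand.

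\emph{Step 3: $H^2$ regularity.} This is the main technical step, and the key observation is that both the lateral periodic boundary and the vertical flat boundary $\{x_d=0,1\}$ are compatible with tangential differentiation. For $j=1,\dots,d-1$ the horizontal difference quotients $D_j^{h}\bfU$ lie in $\mathcal{V}$ uniformly in $h$ (they preserve both periodicity and the homogeneous Dirichlet trace), and one obtains the Nirenberg estimate $\|\pd{x_j}\nabla\bfU\|\leq C\|\bff\|$ in the usual way by testing the equation for $D_j^h\bfU$ against itself. This bounds all second derivatives of $\bfU$ except $\pd{x_d}^2$. To close the gap I would then read off $\pd{x_d}^2 U_d$ from the divergence-free condition (differentiating $\nabla\cdotp\bfU=0$ in $x_d$ expresses $\pd{x_d}^2 U_d$ in terms of already-controlled horizontal derivatives), and for the remaining components $U_i$, $i<d$, I would use the $i$-th component of the Stokes equation in the form $\pd{x_d}^2 U_i = -\De' U_i + \pd{x_i} p - f_i$ (with $\De'$ the horizontal Laplacian), where every term on the right is now in $L^2$. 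Collecting the estimates gives $\|\bfU\|_{H^2}\leq C\|\bff\|$, and feeding this back into Step~2 yields the $H^1$ bound on $p$.

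\emph{Main obstacle.} The one place requiring care is the pressure recovery on a non-simply-connected periodic slab and the coupling between the normal-direction regularity of $\bfU$ and the regularity of $\nabla p$; the flatness of the physical boundary is what makes the tangential difference-quotient argument clean and avoids any localization and boundary-flattening machinery that would be needed in a general smooth domain.
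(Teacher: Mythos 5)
The paper never proves Lemma \ref{lem:stokes}: it is recalled as a classical fact about the Stokes system (cf.\ \cite{ConstantinFoias88}) and merely restated in the paper's notation, so any complete argument is necessarily ``extra'' relative to the paper. Your three-step plan---Lax--Milgram on $\mathcal{V}$, de Rham recovery of the pressure, then horizontal difference quotients combined with the divergence constraint and the equation itself to recover the normal derivatives---is exactly the standard proof for a flat, laterally periodic slab, and Step 1 together with the difference-quotient half of Step 3 is sound as written (the translations in $x_j$, $j<d$, preserve both the periodicity and the zero trace at $x_d=0,1$, so $D_j^h\bfU$ is an admissible divergence-free test field and the pressure drops out).

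There is, however, one circularity in the order of your estimates. To read off $\pd{x_d}^2U_i=-\De' U_i+\pd{x_i}p-f_i$ for $i<d$ you need $\pd{x_i}p\in L^2$, but your plan only produces $\del p\in L^2$ \emph{after} the $H^2$ bound on $\bfU$ (``once the $H^2$ bound is in hand''), i.e., after this very step. The standard repair: once $\|D_j^h\del\bfU\|\leq C\|\bff\|$ is known, apply the Ne\v{c}as/de Rham inequality $\inf_{c\in\RR}\|q-c\|\leq C\|\del q\|_{H^{-1}}$ to $q=D_j^hp$, whose gradient equals $D_j^h\bff+\De D_j^h\bfU$ and is bounded in $H^{-1}$ by $C\|\bff\|$ uniformly in $h$ (difference quotients are bounded from $L^2$ to $H^{-1}$ in the periodic directions); since $D_j^hp$ has zero mean by periodicity, this yields $\pd{x_j}p\in L^2$ for $j<d$ \emph{before} touching the normal direction. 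Then $\pd{x_d}^2U_d$ follows from the divergence constraint as you say, $\pd{x_d}p$ from the $d$-th equation, and $\pd{x_d}^2U_i$ ($i<d$) from the $i$-th equation, closing both bounds. Two smaller remarks: the non-simple-connectedness you worry about is harmless---what is used is that a functional vanishing on compactly supported divergence-free fields is a gradient, valid on any connected bounded domain, with the periodic directions handled by working on $\TT^{d-1}\times[0,1]$; and the conclusion should be read as $p\in H^1(\Om)$ modulo constants (laterally periodic), since the mean-zero normalization plus Poincar\'e does not, and need not, place $p$ in the zero-trace space $V$ of \eqref{eq:V-def}. Alternatively, for this particular geometry the entire lemma follows even more directly by expanding in horizontal Fourier modes and solving the resulting ODE system in $x_d$ mode by mode.
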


 The next lemma follows from the theory of linear transport equations in \cite{DiPerna_Lions} and a variant of the maximum principle proved in \cite{Foias_Manley_Temam_87} .We will refer to the following notation for the
``positive" and ``negative parts" of a function:
\begin{align}\label{pos-neg:part}
  \psi^+:=\max\{\psi,0\},\quad \psi^-:=\max\{-\psi,0\}.
\end{align}

\begin{Lem}
\label{lem:mp}
Let $d=2,3$ and $\tau>0$. Let $\bfU\in L^1(0,\tau;\mathcal{V})$ and  $T_0\in L^\infty(\Om)$ be {a.e.} $L$-periodic in $x_1,x_{d-1}$ and $T_0|_{x_d=0}=0,T_0|_{x_d=1}=1$ (in the sense of trace). Suppose that 
$T\in L^\infty(0,\tau;L^\infty(\Om))\cap L^2(0,\tau;H^1(\Om))$ satisfies
\begin{gather}
  \partial_t T + \bfU\cdot\del T - \Delta T = 0,
  \quad T(\bfX,0) = T_0(\bfX),\notag\\
  \qquad
  T|_{x_d=0} = 1,
  \qquad
  T|_{x_d=1} = 0,\qquad T\ \text{is $a.e.$ $L$-periodic in}\ x_1,x_{d-1},
  \label{T:BCs}
\end{gather}
where the boundary values on $\{x_d=0\}\cup\{x_d=1\}$ are interpreted
in the sense of trace. Then there exists a constant $C_0=C_0(\Om,\Sob{T_0}{})>0$ and functions $\bar{T},\eta$ such that $T=\bar{T}+\eta$ and satisfy
\begin{align*}
 	0\leq\bar{T}(t)\leq1,\quad \eta=(T-1)^+-T^-,\quad \|\eta(t)\|\leq C_0e^{-t}.
\end{align*}
for all $t>0$.
\end{Lem}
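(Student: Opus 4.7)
The plan is to decompose $T$ into a part $\bar T$ that pointwise respects the prescribed boundary range and a residual $\eta$ that measures the excursion beyond $[0,1]$, then show that $\eta$ decays exponentially in $L^2$. Setting $\eta:=(T-1)^+ - T^-$ and $\bar T:=T-\eta$, a pointwise case analysis on $\{T>1\}$, $\{0\le T\le 1\}$, $\{T<0\}$ yields $\bar T\equiv 1,T,0$ respectively, hence $0\le\bar T\le 1$ almost everywhere. Since $(T-1)^+$ and $T^-$ have disjoint supports, $\|\eta\|^2 = \|(T-1)^+\|^2+\|T^-\|^2$, so the task reduces to obtaining exponential decay of each piece.

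The crucial geometric observation is that the boundary data are tuned exactly so that both cutoffs vanish on $\{x_d=0\}\cup\{x_d=1\}$: at $x_d=0$ the value $T=1$ kills both $(T-1)^+$ and $T^-$; at $x_d=1$ the value $T=0$ does the same. Together with horizontal $L$-periodicity, this places both $(T-1)^+$ and $T^-$ in the space $V$ of \eqref{eq:V-def}, where the Poincar\'e inequality holds with the sharp constant $\lambda_1=\pi^2$, the first eigenvalue of $-\Delta$ on $\Omega$ under the mixed horizontally-periodic, vertically-Dirichlet boundary conditions (attained by $\sin(\pi x_d)$).

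Granting the chain-rule step addressed below, testing \eqref{T:BCs} against $(T-1)^+$ and then against $-T^-$ produces
\begin{align*}
\tfrac{1}{2}\tfrac{d}{dt}\|(T-1)^+\|^2 + \|\nabla(T-1)^+\|^2 = 0,
\qquad
\tfrac{1}{2}\tfrac{d}{dt}\|T^-\|^2 + \|\nabla T^-\|^2 = 0,
\end{align*}
since the advective contribution vanishes after integration by parts using $\nabla\cdot\bfU=0$ together with $\bfU|_{x_d=0,1}=\mathbf{0}$ and the vanishing of the cutoffs on the horizontal boundary. Summing the two identities, applying the Poincar\'e inequality above, and invoking Gr\"onwall yield
\begin{align*}
\|\eta(t)\|^2 \le \|\eta(0)\|^2 e^{-2\pi^2 t},
\end{align*}
and since $\pi^2>1$ and $\|\eta(0)\|^2$ is controlled in terms of $|\Omega|$ and $\Sob{T_0}{}$ by a one-line estimate, the stated $C_0 e^{-t}$ bound follows with ample margin.

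The main obstacle is the rigorous justification of the chain rule above given only $T\in L^\infty_tL^\infty_x\cap L^2_tH^1_x$ and $\bfU\in L^1_t\mathcal{V}$, where naive pointwise differentiation of the nonsmooth cutoffs is unavailable. This is precisely the setting of the DiPerna--Lions renormalization theory: one approximates $s\mapsto\tfrac{1}{2}((s-1)^+)^2$ and $s\mapsto\tfrac{1}{2}(s^-)^2$ by smooth convex $\Phi_\varepsilon\in C^2(\RR)$ whose derivatives $\Phi'_\varepsilon(T)$ are admissible test functions, applies the standard chain rule at each $\varepsilon$, and passes to the limit $\varepsilon\to 0$ using the uniform $L^\infty$ bound on $T$ and the $L^2_tH^1_x$ control on $\nabla T$. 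The variant of the maximum principle from \cite{Foias_Manley_Temam_87} cited in the statement supplies exactly the form of this renormalization adapted to the inhomogeneous boundary data at hand, and the divergence-free $H^1$ regularity of $\bfU$ is precisely what is needed to keep the transport commutator under control.
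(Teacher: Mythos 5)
Your proposal is correct and follows essentially the route the paper intends: the paper does not write out a proof but defers to the renormalization theory of DiPerna--Lions and the truncation maximum-principle argument of Foias--Manley--Temam, and your argument---energy identities for $(T-1)^+$ and $T^-$ (which vanish on $\{x_d=0\}\cup\{x_d=1\}$ by the choice of boundary data), cancellation of the advective term via $\nabla\cdot\bfU=0$ and the vanishing traces, the Poincar\'e inequality with $\lambda_1=\pi^2$, and Gr\"onwall---is exactly the standard content behind those citations and behind the inequalities recorded in \eqref{wk:mp}. Your handling of the chain rule by smooth convex approximation and your pointwise verification that $\bar T=T-\eta\in[0,1]$ complete the details consistently with the paper's intended argument.
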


For the system \req{system:data-infinite}, when $\prN=\infty$, the
velocity field, $\bfU$, is determined by the evolution of $T$. The
well-posedness of this system then follows in a standard way, for
either dimension $d=2,3$, and its solution satisfies the estimates
stated in Lemma \ref{lem:stokes} and \ref{lem:mp}. We formally state this result as the following theorem.

\begin{Thm}\label{cor:infinite:prandtl}
  Let $d=2,3$. Suppose that $T_0\in L^\infty(\Om)$ is a.e. in $\Om$, that $T_0$ is
  a.e. $L$-periodic in $x_1,x_{d-1}$ and
  $T_0|_{x_d=0}=0,T_0|_{x_d=1}=1$ (in the sense of trace). Then there
  exists a unique $(\bfU,T)$ satisfying \req{system:data-infinite}
  such that
  \begin{align*}
    \bfU \in L^\infty(0,\tau;\mathcal{W}),
    \quad 
    T\in L^\infty(0,\tau;L^2(\Om))\cap 
         L^2(0,\tau;H^1(\Om))\cap C_w([0,\tau];L^2(\Om)),
  \end{align*}
  for all $\tau>0$. Moreover, there exist positive constants $\gamma_0=\gamma_0(\Om,\Sob{T_0}{})$ and $C_0=C_0(\Om,\|T_0\|)$, such that 
  \begin{align*}
    \raN^{-1}\Sob{\bfU(t)}{H^2}
        \leq \gamma_0,\quad\|T(t)\|\leq \gamma_0',
  \end{align*}
for all $t\geq0$, where $\gamma_0':=|\Om|^{1/2}+C_0e^{-t}$.
\end{Thm}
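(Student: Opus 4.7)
The plan is to exploit the structural simplification available when $\prN=\infty$: the momentum equation in \req{system:data-infinite} reduces to a stationary Stokes system with forcing $\raN\bfe_d T$, so by Lemma \ref{lem:stokes} the velocity is \emph{instantaneously slaved} to the temperature, $\bfU=\mathcal{S}(\raN\bfe_d T)$ for $\mathcal{S}$ the Stokes solution operator, and it satisfies the linear bound $\Sob{\bfU(t)}{H^2}\le C\raN\Sob{T(t)}{}$. The entire system therefore collapses to a single nonlocal drift-diffusion equation for $T$, and both the regularity assertion for $\bfU$ and the claimed bound $\raN^{-1}\Sob{\bfU(t)}{H^2}\le\gamma_0$ will follow immediately from Lemma \ref{lem:stokes} once $L^2$ control on $T$ is in place.

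For existence I would run a Galerkin scheme, after subtracting an affine lifting of the inhomogeneous vertical boundary data so that the unknown temperature perturbation belongs to $H$ and can be expanded in the eigenbasis $\{\phi_n\}$ from \eqref{eq:eigenpairs}. At each truncation level $N$, the approximate velocity $\bfU_N$ is defined via the Stokes solve of Lemma \ref{lem:stokes} with forcing $\raN\bfe_d T_N$, producing a locally Lipschitz ODE system for the spectral coefficients of $T_N$. Uniform-in-$N$ a priori bounds come directly from the two preparatory lemmas: since $\bfU_N\in\mathcal{V}$ is divergence free, Lemma \ref{lem:mp} yields a decomposition $T_N=\bar{T}_N+\eta_N$ with $0\le\bar{T}_N\le1$ a.e.\ and $\Sob{\eta_N(t)}{}\le C_0 e^{-t}$, so that
\begin{align*}
\Sob{T_N(t)}{}\le |\Om|^{1/2}+C_0 e^{-t}=\gamma_0',
\qquad
\raN^{-1}\Sob{\bfU_N(t)}{H^2}\le C\Sob{T_N(t)}{}\le \gamma_0,
\end{align*}
uniformly in $N$ and in $t\ge 0$; these rule out finite-time blow-up and give global solvability of the ODE system.

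To pass to the limit I would combine these uniform bounds with parabolic energy estimates on $\pd{t} T_N$ in $V'$ to extract, via Aubin-Lions compactness, a subsequence with $T_N\to T$ strongly in $L^2(0,\tau;L^2(\Om))$, while the linearity of $\mathcal{S}$ yields $\bfU_N\rightharpoonup\bfU$ weakly-$*$ in $L^\infty(0,\tau;\mathcal{W})$. This is enough to pass to the limit in the advection $\bfU_N\cdot\del T_N$, and the limit pair $(\bfU,T)$ inherits both the uniform bounds and the decomposition of Lemma \ref{lem:mp} by lower semicontinuity. Weak continuity $T\in C_w([0,\tau];L^2(\Om))$ is then a standard consequence of $T\in L^\infty_t L^2_x\cap H^1_t V'$.

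For uniqueness, the difference $\tilde{T}=T_1-T_2$ of two solutions with the same data satisfies a linear advection-diffusion equation with an additional drift involving $\del T_2$, while Lemma \ref{lem:stokes} applied to the Stokes problem for the velocity difference provides $\Sob{\tilde{\bfU}(t)}{H^2}\le C\raN\Sob{\tilde{T}(t)}{}$; testing against $\tilde T$ in $L^2$ and invoking the a priori control on $\bfU_1$ and $\del T_2$ closes the estimate via Gronwall. The main technical obstacle in the program is that the assumed data regularity $T_0\in L^\infty$ is rougher than the usual energy class, so one must verify that the maximum-principle splitting of Lemma \ref{lem:mp} is stable both under the Galerkin truncation and in the limit $N\to\infty$; in practice I would first mollify $T_0$, run the construction in a classical framework, and then pass to the $L^\infty$ limit in the data while checking that the constants $\gamma_0$ and $C_0$ depend solely on norms preserved by the mollification.
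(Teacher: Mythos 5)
Your overall strategy is exactly the one the paper has in mind: the paper gives no detailed proof of this theorem, remarking only that, since $\bfU$ is enslaved to $T$ through the stationary Stokes problem, well-posedness ``follows in a standard way'' and the stated bounds come from Lemma \ref{lem:stokes} and Lemma \ref{lem:mp}. Your slaving $\bfU=\mathcal{S}(\raN\bfe_d T)$ with $\raN^{-1}\Sob{\bfU(t)}{H^2}\leq C\Sob{T(t)}{}$, the decomposition $T=\bar{T}+\eta$ giving $\Sob{T(t)}{}\leq|\Om|^{1/2}+C_0e^{-t}=\gamma_0'$, and the Gronwall uniqueness argument are precisely that program spelled out, so in substance you and the paper take the same route.

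The one step that does not work as written is your application of Lemma \ref{lem:mp} to the Galerkin approximations $T_N$. The lemma is a statement about solutions of the drift--diffusion equation itself; its proof rests on testing with nonlinear truncations such as $(T-1)^+$ and $T^-$, which are not admissible test functions for the spectrally truncated system, and the projection $P_N$ destroys the pointwise structure on which the maximum principle relies. Mollifying $T_0$, as you propose at the end, does not cure this: the obstruction is the truncation, not the roughness of the data. The standard repair is to reorder the argument. First obtain uniform-in-$N$ bounds on any fixed interval $[0,\tau]$ by a plain $L^2$ energy estimate on the lifted variable $\theta_N=T_N-(1-x_d)$, using $\Sob{\bfU_N}{H^2}\leq C\raN\Sob{T_N}{}$, the cancellation of the advection term, and Gronwall; these bounds may grow in time, but that is all that is needed to rule out blow-up of the ODE system and to run the Aubin--Lions compactness argument you describe. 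Only after passing to the limit should Lemma \ref{lem:mp} be invoked, applied to the limiting solution $T$ (whose hypothesis $T\in L^\infty(0,\tau;L^\infty(\Om))$ must be checked separately, e.g.\ by Stampacchia-type truncation at the PDE level, which is legitimate there since $T\in L^2(0,\tau;H^1(\Om))$ and the drift is bounded); this yields the time-uniform bound $\gamma_0'$ and then, via Lemma \ref{lem:stokes}, the bound $\gamma_0$. With this reordering your argument is complete and coincides with the paper's intended proof.
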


We consider a change of variable, denoted by $\theta({\bf x},t)$, that
represents the fluctuation of the temperature around the steady state
background temperature profile $1-x_d$:
\begin{align}\label{perturb:var}
  \theta({\bf x},t) = T({\bf x},t) - (1-x_d).
\end{align}
The functional setting determined by
\req{eq:H-def}-\req{eq:W-def} accommodates a rigorous mathematical
analysis for the perturbed variable $\tht$. The results derived for
$\tht$ are then transferred naturally to the desired results for the
original variable $T$. We appeal to \cite{XWang05, Wang2007} for the
global existence and eventual regularity of suitable weak solutions
for \req{nd:Bou}-\req{nd:bc}, as well as the existence of the ``global attractor" for the dynamics, although
we will not make explicit use of this fact in this article. 
We will also say that a
solution $(\bfU,T)$ of \req{nd:Bou}, \req{nd:bc} is \textit{regular on
  $[0,\tau]$} if
$(\bfU,\tht)\in L^\infty(0,\tau;\mathcal{V})\times L^\infty(0,\tau;V)$.
If $\tau=\infty$, we say that the solution is a \textit{global regular
  solution}.

\begin{Thm}[\cite{XWang05, Wang2007}]
\label{prop:3d:Bou}
Let $d=2,3$. Recalling the notation (\ref{eq:H-def})--(\ref{eq:W-def}) let
$(\bfU_0,\tht_0)\in \mathcal{H}\times H$ and $T_0:=\tht_0+(1-x_d)$.
\begin{enumerate}[(i)]
\item (Global Existence of Weak Solutions) For any $\tau>0$,
  there exists $(\bfU,T)$ such
  \begin{align*}
    &\bfU \in L^\infty(0,\tau;\mathcal{H}) \cap L^2(0,\tau;\mathcal{V}) 
      \cap C_w([0,\tau];\mathcal{H}),
      \qquad
      \od{\bfU}{t} \in L^{4/3}(0,\tau;\mathcal{V}'),
    \\
    &\theta \in L^\infty(0,\tau;{H}) \cap L^2(0,\tau;{V}) 
      \cap C_w([0,\tau];{H}),
      \qquad 
      \od{\theta}{t} \in L^{4/3}(0,\tau;{V}'), 
  \end{align*}
  where $\tht$ is related to $T$ by the relation
  \req{perturb:var}. This $(\bfU, T)$ satisfies
  \eqref{nd:Bou}--\eqref{nd:bc} in the usual weak sense, and maintains
  the initial condition $(\bfU_0,T_0)$. Moreover, the following energy
  inequality and maximum principle are satisfied for all $t\leq\tau$:
  \begin{gather}\label{wk:mp}
    \frac{1}{\prN}\|\bfU(t)\|^2+2\int_0^t\|\del\bfU(s)\|^2ds
    \leq\frac{1}{\prN}\|\bfU_0\|^2+2\raN\int_0^t\lb \tht(s),u_3(s)\rb\:ds,\notag\\
    \|(T-1)^+(t)\|^2+2\int_0^t\|\del(T-1)^+(s)\|^2\:ds
    \leq\|(T_0-1)^+\|^2,\\
    \|T^-(t)\|^2+2\int_0^t\|\del T^-(s)\|^2\:ds
    \leq\|(T_0)^-\|^2,\notag
  \end{gather}
  where $(T-1)^+, T^-$ are defined as in \req{pos-neg:part}.

\item (Eventual Regularity of Weak Solutions) There exists a
   constant $K_0 > 0$ such that if $\prN\raN^{-1}\geq K_0$,
  then there exists a time $\tau^*>0$ for which all suitable weak
  solutions corresponding to $(\bfU_0,\tht_0)\in\mathcal{H}\times H$ become regular solutions
  on $[\tau^*,\infty)$. In particular, when $d=3$, there exist constants $\kap_1,\kap_2,\kap_3>0$ and $\kap_0'',\kap_1'',\kap_2''>0$, depending on $\Om$, such that
	\begin{align}\label{eventual:reg}
		\begin{split}
		&\Sob{\bfU(t)}{H^1}\leq\kap_1\raN,\quad \Sob{\bfU(t)}{H^2}\leq \kap_2\raN^{5/2},\quad \Sob{\bdy_t\bfU(t)}{}\leq \kap_3\raN^{7/2}\\
		&\Sob{\tht(t)}{}\leq \kap_0'',\quad\qquad\quad\Sob{\tht(t)}{H^1}\leq\kap_1''\raN^{5/2},\quad\Sob{\tht(t)}{H^2}\leq\kap_2''\raN^8.
		\end{split}
	\end{align}
for all $t\geq\tau^*$.
\end{enumerate}
\end{Thm}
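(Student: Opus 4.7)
The plan follows the standard Leray--Hopf paradigm for part (i) and a bootstrap of the type developed in \cite{XWang05, Wang2007} for part (ii), so I sketch only the main architecture. For part (i), I would construct Galerkin approximations $(\bfU^n, \tht^n)$ by projecting the momentum and temperature equations onto the span of the first $n$ eigenfunctions of the Stokes operator on $\mathcal{V}$ and of $-\Delta$ on $V$ (with horizontally periodic, vertically Dirichlet conditions as in \eqref{eq:smooth:test}). These finite-dimensional ODEs are globally solvable, so the task reduces to deriving a priori estimates uniform in $n$ and passing to a weak limit.

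For the maximum-principle bounds in \eqref{wk:mp}, I would test the temperature equation against $(T^n-1)^+$ and $(T^n)^-$; the boundary data $T^n|_{x_d=0}=1,\ T^n|_{x_d=1}=0$ ensure that these test functions lie in $V$, and $\del\cdot\bfU^n=0$ kills the convective term, yielding the claimed bounds after integration in time. The kinetic energy inequality follows from testing with $\bfU^n$, using the orthogonality $\lb(\bfU^n\cdot\del)\bfU^n,\bfU^n\rb=0$, with the buoyancy term producing the $\lb\tht,u_d\rb$ integral on the right-hand side. Standard bounds on the convective nonlinearity (Ladyzhenskaya-type inequalities) give $\od{\bfU^n}{t}\in L^{4/3}(0,\tau;\mathcal{V}')$ and $\od{\tht^n}{t}\in L^{4/3}(0,\tau;V')$, after which Aubin--Lions compactness yields strong convergence in $L^2(0,\tau;\mathcal{H}\times H)$, enabling passage to the limit in the weak formulation and producing the weak continuity in time.

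For part (ii), the key observation is that the momentum equation can be regarded as the stationary Stokes problem $-\Delta\bfU+\del p=\raN\bfe_d T$ perturbed by the inertial term $\prN^{-1}[\partial_t\bfU+(\bfU\cdot\del)\bfU]$, which is subordinate precisely when $\prN/\raN\geq K_0$. I would then iterate as follows: the maximum principle controls $\|T\|_{L^\infty}$ after a short transient; Lemma \ref{lem:stokes} applied to the velocity equation (treating the inertial term as a forcing) yields an $H^2$ bound for $\bfU$ scaling with a power of $\raN$; substituting into the temperature equation and using parabolic smoothing together with Agmon/Gagliardo--Nirenberg interpolation produces successively the $H^1$ and $H^2$ estimates on $\tht$, which can be fed back into the velocity bound. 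This loop converges past an initial transient $\tau^*$ and yields the explicit $\raN$-exponents displayed in \eqref{eventual:reg}.

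The main obstacle is part (ii) in three dimensions, since without the hypothesis $\prN/\raN\geq K_0$ the bootstrap would run directly into the usual 3D Navier--Stokes regularity obstruction at the step where one attempts to close the $H^1$ estimate on $\bfU$. The largeness of $\prN/\raN$ is exactly what allows the quadratic convective term to be absorbed into the viscous dissipation and the estimates to close. Tracking the dependence of $\tau^*$ and of the constants $\kap_i,\kap_i''$ on $\Om$ and on $\raN$ with the sharp powers recorded in \eqref{eventual:reg} is the most delicate bookkeeping in the argument, and is executed in full in \cite{XWang05, Wang2007}.
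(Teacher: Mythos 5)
First, note that the paper does not prove this theorem at all: it is quoted verbatim from \cite{XWang05, Wang2007} as background, so there is no in-paper argument to compare against. Your overall architecture (Galerkin plus a priori bounds plus Aubin--Lions for (i); a large-Prandtl bootstrap treating the inertial term as a perturbation of the Stokes problem for (ii)) is indeed the route taken in those references, and the remark that $\prN\raN^{-1}\geq K_0$ is what lets the quadratic convective term be absorbed so the 3D regularity obstruction is bypassed is the right heart of part (ii).

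There is, however, one step in your sketch of (i) that would fail as written. You propose to obtain the maximum-principle bounds in \eqref{wk:mp} by testing the Galerkin temperature equation against $(T^n-1)^+$ and $(T^n)^-$. These truncations are nonlinear functions of $T^n$ and do not lie in the span of the first $n$ eigenfunctions, so they are not admissible test functions for the projected equations; the Galerkin structure is incompatible with Stampacchia-type truncation. The standard fix is to derive \eqref{wk:mp} at the level of the limiting weak solution (whose regularity $T\in L^2(0,\tau;H^1)$ with $\od{\tht}{t}\in L^{4/3}(0,\tau;V')$ suffices to justify the truncation argument, as in the Foias--Manley--Temam result the paper invokes for Lemma \ref{lem:mp}), or to use an approximation scheme that preserves the transport structure (e.g.\ mollifying the advecting velocity) rather than spectral projection of the temperature equation. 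Relatedly, because the boundary data for $T$ are inhomogeneous, the Galerkin expansion must be carried out in the perturbation variable $\tht=T-(1-x_d)$ of \eqref{perturb:var}, with the background profile contributing lower-order linear terms; the eigenfunctions of \eqref{eq:smooth:test} cannot represent $T$ itself. With these repairs your outline matches the cited proofs; the remaining content of (ii), namely the precise $\raN$-exponents in \eqref{eventual:reg} and the length of the transient $\tau^*$, is bookkeeping that, as you say, is executed in \cite{XWang05, Wang2007}.
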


  For the remainder of the article, we will therefore assume that $(\bfU,T)$,
  $(\bfUT,\tilde{T})$ have evolved for a sufficiently long time, so that $(\bfU,T)$ is a regular solution to either 
  \req{nd:Bou}--\req{nd:bc} or 
  \eqref{system:data-infinite}. 
  Physically speaking, the set-up of our study assumes that reality is represented exactly by a solution to  \req{nd:Bou}--\req{nd:bc} or  \eqref{system:data-infinite} and that we have been observing the system after the point in time at which it has become globally regular. Thus, for the purposes of our analysis, we will henceforth make the following standing hypotheses for the remainder of the paper.

\subsection*{Standing Hypotheses.}

Let $d=2,3$ and $\raN\geq1$. Let $\gamma_0>0$ be the constants from Theorem \ref{cor:infinite:prandtl}. When $\Pr=\infty$, we assume:
\begin{enumerate}
\item[$(I1)$] $T_0$ a.e. periodic in $x_1,x_{d-1}$ and $T_0|_{x_d=0}=0$, $T_0|_{x_d=1}=1$ (in the sense of trace);
\item[$(I2)$] $T_0\in L^\infty(\Om)$;
\item[$(I3)$] $(\bfU,T)$ is the unique global solution to \req{system:data-infinite} corresponding to $T_0$ and guaranteed by Theorem \ref{cor:infinite:prandtl}.
In particular, $(\bfU,T)$ satisfies 
	\[
	    \raN^{-1}\Sob{\bfU(t)}{H^2}
        \leq \gamma_0,\quad \|T(t)\|\leq \gamma_0',
	\]
for all $t>0$.\qed
\end{enumerate}
On the other hand, let $K_0>0$, $\kap_1,\kap_2,\kap_3>0$, and $\kap_0'',\kap_1'',\kap_2''>0$ be the constants in Theorem \ref{prop:3d:Bou} $(ii)$. 
When $\Pr<\infty$, we assume:
\begin{enumerate}
\item[$(F1)$] $\Pr\raN^{-1}\geq K_0$;
\item[$(F2)$] $(\bfU,T)$ is the unique regular solution to \req{nd:Bou}, \req{nd:bc};
\item[$(F3)$] When $d=3$, $\bfU(t)$ satisfies
	\[
		\Sob{\bfU(t)}{H^1}\leq\kap_1\raN,\quad \Sob{\bfU(t)}{H^2}\leq \kap_2\raN^{5/2},\quad \Sob{\bdy_t\bfU(t)}{}\leq \kap_3\raN^{7/2},
	\]
	for all $t\geq0$.
\item[$(F4)$] When $d=3$, $T$ satisfies
	\[
		\Sob{T(t)}{}\leq\kap_0',\quad\Sob{T(t)}{H^1}\leq\kap_1'\raN^{5/2},\quad\Sob{T(t)}{H^2}\leq\kap_2'\raN^8,
	\]
	for all $t\geq0$, where $\kap_j'=\kap_j''+2|\Om|^{1/2}+1$, $j=0,1,2$.\qed
\end{enumerate}

\begin{Rmk}
Since we have non-dimensionalized our variables, we point out that although the bounds in $(F3), (F4)$ are derived for the case $d=3$, they are also valid for the case $d=2$, up to constants, provided that $\raN\geq1$, which have have assumed as one of our standing hypotheses.
\end{Rmk}
\section{Infinite-Prandtl Assimilation} 
\label{section:infinite-prandtl-assimilation}

We will first treat the Data Assimilation problem for the infinite
Prandtl number system \req{system:data-infinite},
\req{system:nudge-infinite}. Hence, throughout this section we assume that $\prN=\infty$, i.e., $(I1)-(I3)$ holds.

A rigorous mathematical analysis is
performed in dimensions $d=2,3$, while the numerical component of our
studies are carried out for $d=2$. Due to the structure of the nudged system \req{system:nudge-infinite}, we do not have a maximum principle. Instead, we require only that \req{system:nudge-infinite} have a well-defined solution in the weak sense, which one can do by establishing that the differences $\bfW=\bfUT-\bfU$, $S=\T-T$ satisfy their respective evolution in the weak sense. Since one of the relevant a priori estimates to this end are performed below for the proof of Theorem \ref{thm:infty}, we simply state this result as the following theorem.

\begin{Thm}\label{thm:nudge:infinite}
Let $\mu>0$ and $\{\lambda_n\}_{n=1}^\infty$ be as in \req{eq:eigenpairs}. Let $\T_0\in L^2(\Om)$ a.e. in $\Om$ such that $\til{T}_0$ is
  a.e. $L$-periodic in $x_1,x_{d-1}$ with
  ${\T}_0|_{x_d=0}=0,{\T}_0|_{x_d=1}=1$ (in the sense of trace). Suppose $N>0$ satisfies
$\frac{1}{4}\lambda_N \ge \mu$.
Then there exists a unique $(\bfUT,\til{T})$ satisfying \req{system:nudge-infinite} in the weak sense
  such that
  \begin{align*}
    \bfUT \in L^\infty(0,\tau;\mathcal{W}),
    \quad 
    \T\in L^\infty(0,\tau;L^2(\Om))\cap 
         L^2(0,\tau;H^1(\Om))\cap C_w([0,\tau],L^2(\Om)),
  \end{align*}
  for all $\tau>0$. 
\end{Thm}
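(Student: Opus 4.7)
The plan is to build $(\bfUT, \T)$ via a Galerkin approximation in the eigenbasis $\{\phi_n\}$ of \eqref{eq:eigenpairs}, working with the fluctuation $\vartheta := \T - (1 - x_d)$ so that the inhomogeneous Dirichlet data become homogeneous and compatible with the $\phi_n$. Since $\bfUT$ is slaved to $\T$ through the steady Stokes equations, Lemma \ref{lem:stokes} yields the pointwise-in-time bound $\|\bfUT(t)\|_{H^2} \le C\raN\|\T(t)\|$, so the argument reduces to constructing $\vartheta$ and recovering $\bfUT$ after the fact. At each Galerkin level $m$, the projected system is a locally Lipschitz ODE in the coefficients of $\vartheta^{(m)}$ (with $\bfUT^{(m)}$ produced from $\vartheta^{(m)}$ by Stokes inversion), and hence is locally well-posed.

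The central a priori bound comes from testing the $\vartheta$-equation against $\vartheta^{(m)}$. Incompressibility kills the transport term, $\Delta(1-x_d)=0$ removes one background contribution, and $\bfUT^{(m)}\cdot\nabla(1-x_d) = -\tilde u_d^{(m)}$ moves to the right, leaving
\begin{align*}
\tfrac12 \od{}{t}\|\vartheta^{(m)}\|^2 + \|\del \vartheta^{(m)}\|^2 + \mu\|P_N\vartheta^{(m)}\|^2 = \lb \tilde u_d^{(m)}, \vartheta^{(m)}\rb + \mu\lb P_N(T - (1-x_d)), \vartheta^{(m)}\rb .
\end{align*}
The first right-hand term is bounded via Lemma \ref{lem:stokes} by $C\raN(\|\vartheta^{(m)}\|+|\Om|^{1/2})\|\vartheta^{(m)}\|$; the second is handled by Young's inequality against the standing bound $\|T(t)\|\le\gamma_0'$ from Theorem \ref{cor:infinite:prandtl}, and the spectral condition $\tfrac14\lambda_N\ge\mu$ is invoked to absorb $\tfrac{\mu}{2}\|P_N\vartheta^{(m)}\|^2$ back into the left-hand side. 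A Gronwall argument on every finite $[0,\tau]$ then delivers $m$-uniform bounds in $L^\infty(0,\tau;L^2)\cap L^2(0,\tau;H^1)$, and duality yields $\bdy_t \vartheta^{(m)} \in L^{4/3}(0,\tau;V')$.

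Passage to the limit is by now routine: Banach-Alaoglu and Aubin-Lions produce weak-$\ast$, weak, and strong (in $L^2(0,\tau;L^2)$) convergence, respectively; continuity of the Stokes map upgrades the convergence of $\bfUT^{(m)}$ to strong convergence in $L^2(0,\tau;\mathcal{W})$, which suffices to identify the limit of the transport nonlinearity. The $C_w([0,\tau];L^2)$ regularity of $\T$ is recovered from the time-derivative bound, and the initial condition is attained. Uniqueness follows by setting $(\bfW,S)=(\bfUT_1-\bfUT_2,\T_1-\T_2)$, observing that $\|\bfW\|_{L^\infty}\le C\|\bfW\|_{H^2}\le C\raN\|S\|$ from Lemma \ref{lem:stokes}, and testing the $S$-equation against $S$: integration by parts recasts the coupling as $\lb \bfW\cdot\del S,\T_2\rb$, which is dominated by $C\raN\|S\|\|\del S\|\|\T_2\|$, and Young plus Gronwall force $S\equiv 0$ from $S(0)=0$.

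The principal obstacle, flagged in the paragraph preceding the statement, is the absence of a maximum principle for the nudged equation: the $L^\infty$ control on $\T$ available for unperturbed solutions through Lemma \ref{lem:mp} is unavailable here, and every bound on $\bfUT$ must pay the factor $\raN\|\T\|$ imported from the Stokes map. One is therefore forced to close the energy inequality in a single step from purely $L^2$ information, and the spectral condition $\tfrac14\lambda_N\ge\mu$ is used exactly to recover enough additional coercivity from the nudging term to keep the Gronwall argument tractable.
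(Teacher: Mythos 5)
Your construction is correct, but it takes a different route from the paper. The paper does not build $(\bfUT,\T)$ directly: it observes that, since $(\bfU,T)$ is already a known regular solution of \req{system:data-infinite} enjoying the maximum-principle bounds, it suffices to show that the \emph{differences} $\bfW=\bfUT-\bfU$, $S=\T-T$ solve their evolution \req{eq:diff:DA} in the weak sense, and it defers the requisite a priori estimate to the one carried out in the proof of Theorem \ref{thm:infty} (the $L^2$ estimate for $S$ with the nonlinearity $\bfW\cdot\del T$ handled exactly as in \req{2d:nlt:est}); existence and uniqueness for the nudged system are then read off by adding back the reference solution, so nothing is proved twice. You instead homogenize by subtracting the conduction profile $1-x_d$ rather than the reference solution, run a Galerkin scheme in the eigenbasis for the fluctuation $\vartheta$, and treat the data only through the forcing $\mu P_N T$, which needs just $T\in L^\infty(0,\tau;L^2)$. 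This is more self-contained and in fact slightly more general (it does not use that the observed $T$ solves anything, only its $L^2$ bound), at the cost of redoing an energy estimate the paper inherits from the synchronization proof; your uniqueness argument, via $\Sob{\bfW}{H^2}\leq C\raN\Sob{S}{}$ and the integrated-by-parts coupling $\lb \bfW\cdot\del S,\T_2\rb$, is precisely the mechanism of \req{2d:nlt:est}, so the two routes converge there. Two small imprecisions, neither fatal: the condition $\tfrac14\lambda_N\geq\mu$ is not actually needed to close your a priori bound (the term $\mu\Sob{P_N\vartheta^{(m)}}{}^2$ already sits on the left, and for finite-time bounds the Gronwall constant may depend on $\mu$); and for the uniqueness step you should record that $\bdy_t\vartheta$ lies in $L^2(0,\tau;V')$ (immediate since $\bfUT\in L^\infty(0,\tau;\mathcal{W})\imb L^\infty(0,\tau;L^\infty)$ and $\del\T\in L^2(0,\tau;L^2)$), since the $L^{4/3}$ bound you quote, while sufficient for Aubin--Lions, is not enough by itself to justify testing the difference equation against $S$ and using $\ddt\Sob{S}{}^2=2\lb\bdy_t S,S\rb$.
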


\subsection{Synchronization} 
\label{sec:analysis}

We are now ready to state and prove the main theorem of this section. We show that
\eqref{system:data-infinite} and \eqref{system:nudge-infinite}
synchronize under certain conditions detailed below.

\begin{Thm}[Infinite-Prandtl Synchronization]
\label{thm:infty}
Let $\mu>0$ and $\{\lam_n\}_{n=1}^\infty$ be as in \req{eq:eigenpairs}. Let $N>0$ satisfy $ \frac{1}{4}\lambda_N \geq\mu$ and $\til{T}_0$ be given as in Theorem \ref{thm:nudge:infinite}. Let $(\bfUT,\T)$ be the corresponding unique solution to \req{system:nudge-infinite} guaranteed by Theorem \ref{thm:nudge:infinite}. There exists a constant $C_0=C_0(\Om,\|T_0\|)> 0$ such that if
\begin{align}
\label{eq:mu-requirement}
   \mu \ge {C_0}\raN^2,
\end{align}
then for all $t\geq0$
\begin{align}\label{synch:infinite}
\|(\T - T)(t)\|^2 + \raN^{-2}\|(\bfUT - \bfU)(t)\|^2_{H^2} \leq C_1e^{-\mu t},
\end{align}
for
$C_1=\|\T_0-{T}_0 \|$.  
\begin{proof} 
  Let $S = \T - T$, $\bfW = \bfUT - \bfU$, and $q=\tilde{p}-p$.
  Subtracting \eqref{system:data-infinite} from
  \eqref{system:nudge-infinite} yields the system
  \begin{align}\label{eq:diff:DA}
    \begin{gathered} 
    -\Delta \bfW = -\nabla q + \raN \bfe_3 S,
      \quad \nabla\cdot\bfW = 0,\\
    \partial_t S + \bfUT \cdot \nabla S + \bfW \cdot \nabla T - \Delta S = -\mu P_N S,\\
    \bfW|_{x_3 = 0} = \bfW|_{x_3 = 1} = \mathbf{0},
      \quad S|_{x_3 = 0} = S|_{x_3 = 1} = 0,
      \quad \bfW, S\textrm{ are periodic in }x_1 \textrm{ and } x_2,
  \end{gathered}
\end{align}
with the initial condition $S(\bfX,0) = \T_0(\bfX) - T_0(\bfX)$.  The
momentum equation in \eqref{eq:diff:DA} satisfies Lemma
\ref{lem:stokes}, so there exists a constant $C > 0$ such that
\begin{align}
\|\bfW\|_{H^2} + \|q\|_{H^1} &\leq C\Sob{\raN\mathbf{e}_3 S}{}.
\label{eq:stokes-lemma-applied}
\end{align}
Therefore, to establish \req{synch:infinite}, it is sufficient to show
that $S\rightarrow 0$ with an exponential rate in $L^2(\Omega)$.

Upon multiplying the $S$ equation in \eqref{eq:diff:DA} by $S$ and
integrating over $\Omega$, we obtain
\begin{align}\label{energy:balance:infty}
  \frac{1}{2} \frac{d}{dt} \|S\|^2 + \|\del S\|^2  + \mu \|P_N(S)\|^2
  &= -\int_\Omega (\bfW \cdot \del T)S\:d\bfX.
\end{align}
Assume that $\mu$ is chosen sufficiently large so that
\eqref{eq:mu-requirement} holds,
where $C_0>0$ is, as of yet, unspecified.  After integrating by parts,
an application of the Sobolev embedding
$H^2(\Omega)\imb L^\infty(\Omega)$, \req{eq:stokes-lemma-applied}, and then 
$(I3)$ of the Standing Hypotheses imply 
\begin{align}\label{2d:nlt:est}
\left| \int_\Omega (\bfW \cdot \del T) S\:d\bfX\right|
&\leq \|\bfW\|_{L^\infty}\|T\|\|\del S\|
\leq C \raN\|S\|\|T\|\|\del S\|
\notag\\
& \leq \frac{1}{2}\|\del S\|^2 +  C \raN^2\|T\|^2\|S\|^2
\leq \frac{1}{2}\|\del S\|^2+C\raN^2\gamma_0'\|S\|^2.
\end{align}
On the other hand, due to \eqref{eq:pn+qn=f} and the inverse
Poincar\'e inequality, and since $N>0$ satisfies $\frac{1}4\lam_N\geq\mu$, it follows that
\begin{align}
\label{2d:poincare}
   \frac{1}{2}\|\del S\|^2 + \mu\|P_NS\|^2
&= \frac{1}{2}\|\del S\|^2 - \mu\|Q_NS\|^2 + \mu \|S\|^2
\notag\\
&\geq\left(\frac{1}2-\frac{\mu}{\lambda_N}\right)\|\del S\|^2+\mu\|S\|^2
\geq \mu\|S\|^2.
\end{align}
Joining \eqref{energy:balance:infty} with \eqref{2d:nlt:est} and
\eqref{2d:poincare} and combining like terms, we arrive at
\begin{align*}
  \frac{d}{dt}\|S\|^2 + 2\left(\mu  - C\raN^2|\Om|\right)\|S\|^2  \leq 0.
\end{align*}
Finally, by Gronwall's inequality and the second
condition in \eqref{eq:mu-requirement}, we deduce that
\begin{align}
\|S(t)\|^2 &\leq \|S(0)\|^2\exp\left(-2\mu t+C\raN^2\gamma_0't\right)\leq \|S(0)\|^2\exp\left(-\mu t\right),
\label{eq:DA:est:1}
\end{align}
which completes the proof upon choosing $C_0=C\gamma_0'$.
\end{proof}
\end{Thm}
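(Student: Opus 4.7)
The plan is to derive a closed scalar differential inequality for $\|S\|^2$ where $S:=\T-T$, and to recover the velocity bound for free from the Stokes relation. Subtracting \req{system:data-infinite} from \req{system:nudge-infinite}, the difference $\bfW:=\bfUT-\bfU$ solves the stationary Stokes problem $-\Delta\bfW+\nabla q=\raN\,\bfe_d S$, $\nabla\cdot\bfW=0$, inheriting the no-slip/periodic boundary data, so Lemma~\ref{lem:stokes} yields $\|\bfW\|_{H^2}\le C\,\raN\,\|S\|$. Hence any exponential decay of $\|S\|$ automatically transfers to the $H^2$ bound on $\bfW$ required in \req{synch:infinite}, and the task reduces to controlling $S$ alone.

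The evolution of $S$ reads $\partial_t S+\bfUT\cdot\nabla S+\bfW\cdot\nabla T-\Delta S=-\mu P_N S$, with zero Dirichlet data at $x_d=0,1$ and horizontal periodicity. Pairing with $S$ in $L^2$, the transport term $\lb\bfUT\cdot\nabla S,S\rb$ vanishes because $\bfUT$ is divergence-free and vanishes on $\{x_d=0,1\}$, producing
\begin{equation*}
\tfrac{1}{2}\tfrac{d}{dt}\|S\|^2+\|\nabla S\|^2+\mu\|P_N S\|^2=-\int_\Omega(\bfW\cdot\nabla T)\,S\,d\bfX.
\end{equation*}
The left-hand side will supply both diffusion and nudging, which together should yield a decay rate proportional to $\mu$; the only nontrivial work is estimating the right-hand coupling term.

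The main obstacle is that the standing hypotheses $(I1)$--$(I3)$ control only $\|T(t)\|\le\gamma_0'$ and give no control on $\nabla T$, so a naive bound $\|\bfW\|\,\|\nabla T\|\,\|S\|$ is unavailable. My fix is to integrate by parts to move the gradient off $T$: using $\nabla\cdot\bfW=0$ and $\bfW|_{x_d=0,1}=\mathbf{0}$, all boundary terms vanish and
\begin{equation*}
\int_\Omega(\bfW\cdot\nabla T)\,S\,d\bfX=-\int_\Omega T\,(\bfW\cdot\nabla S)\,d\bfX.
\end{equation*}
A H\"older split in $L^\infty\cdot L^2\cdot L^2$, the Sobolev embedding $H^2\imb L^\infty$, and the Stokes bound $\|\bfW\|_{H^2}\le C\,\raN\,\|S\|$, followed by Young's inequality, then dominate the above by $\tfrac{1}{2}\|\nabla S\|^2+C\,\raN^2\,(\gamma_0')^2\,\|S\|^2$.

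To close, I absorb the residual $\tfrac{1}{2}\|\nabla S\|^2$ against the dissipation using the spectral-gap hypothesis $\lam_N\ge 4\mu$: decomposing $S=P_NS+Q_NS$ and invoking the inverse Poincar\'e inequality $\|\nabla Q_NS\|^2\ge\lam_N\|Q_NS\|^2$ gives $\tfrac{1}{2}\|\nabla S\|^2+\mu\|P_NS\|^2\ge\mu\|S\|^2$. Combining yields
\begin{equation*}
\tfrac{d}{dt}\|S\|^2+2\bigl(\mu-C\,\raN^2(\gamma_0')^2\bigr)\|S\|^2\le 0,
\end{equation*}
and choosing the constant $C_0$ in \req{eq:mu-requirement} large enough that $\mu-C\,\raN^2(\gamma_0')^2\ge\mu/2$, Gronwall's lemma delivers $\|S(t)\|^2\le\|S(0)\|^2 e^{-\mu t}$. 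Feeding this back into the Stokes bound on $\bfW$ yields the combined estimate \req{synch:infinite}.
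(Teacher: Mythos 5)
Your proposal is correct and follows essentially the same route as the paper's proof: the Stokes estimate of Lemma \ref{lem:stokes} to enslave $\bfW$ to $S$, an $L^2$ energy estimate for $S$ with the coupling term handled by integrating by parts onto $S$ and bounding $\|\bfW\|_{L^\infty}$ via $H^2\imb L^\infty$, the spectral condition $\tfrac14\lam_N\geq\mu$ to absorb the residual dissipation, and Gronwall. Your version is in fact slightly more careful than the paper's on two cosmetic points (writing $(\gamma_0')^2$ rather than $\gamma_0'$, and making the integration by parts explicit), but the argument is the same.
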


Theorem \ref{thm:infty} shows that given $\raN$ and reasonable
boundary conditions $T_0$ and $\T_0$, we can always choose $\mu$ and
$N$ large enough so that $(\bfU,T)$ and $(\bfUT,\T)$ eventually match
in the infinite-time limit.  In the next subsection, we use numerical
simulations to verify that this is indeed the case.

\subsection{Numerical Results at Infinite Prandtl number} 
\label{section:numerical-results-infinite}

Rather than focusing on a handful of highly turbulent
three-dimensional simulations, for the same computational cost we
consider more detailed and extensive simulations in the
two-dimensional setting where $\Omega = [0,L]\times[0,1]$ with
coordinates $\bfX = (x_1,x_3)$ in order to search through the relevant
parameters.  We simulate (\ref{system:data-infinite}) and
(\ref{system:nudge-infinite}) using a stream function
formulation with \emph{Dedalus} \cite{2016ascl.soft03015B}, a Python
package that uses pseudospectral methods to solve partial differential
equations on spectrally representable domains. All of the simulations in this section and all that follow are completed with a 4-stage 3rd order Runge-Kutta implicit-explicit time stepping scheme that treats the linear terms implicitly and the nonlinear terms explicitly.  Each simulation runs with $L=4$ and $256$ Fourier grid points in the horizontal and $128$ Chebyshev points in the vertical with a standard $3/2$ dealiasing.  All simulations were run until the time-averaged Nusselt number and other pertinent statistics were temporally well-converged for all cases considered here.

Choosing the initial conditions $T_0$ and $\T_0$ in our numerical
experiments is nontrivial.  For $T_0$, we have two options.
\begin{itemize}
\item Set $T_0(x_1,x_3) = 1 - x_3 + \varepsilon(x_1,x_3)$ for some
  small perturbation function $\varepsilon:\Psi\rightarrow\RR$.  This
  begins the simulation close to the conductive state $1 - x_3$,
  which---though not physically relevant---is a suitable starting
  point for initial experiments.
\item Load $T_0$ from the final state of a previous simulation with
  similar parameters.  If this previous simulation has run long enough, $T$
  will thus begin in a reasonable state for the new set of parameters.
\end{itemize}
Finding a suitable choice for $\T_0$ is even less obvious.  Setting
$\T_0(x_1,x_3) = 1 - x_3$ assumes no prior knowledge of $T$, and results in an overly stiff initialization due to the strength
of the initial temperature difference, and is therefore numerically
impractical.  In addition, initiating $T$ at this conductive state
ignores observations taken at the initial time $t=0$ which would not
be advantageous in practice.  On the other hand, setting
$\T_0 = P_N(T_0)$ for large $N$ results in a very weak temperature
difference so that (\ref{system:data-infinite}) and
(\ref{system:nudge-infinite}) are nearly synchronized at $t = 0$.  As
a compromise, we set $\T_0$ as a low-mode projection of $T_0$ such as
$\T_0 = P_4(T_0)$.  This permits some initial knowledge of the true
state without driving it directly to the `truth' immediately.

Values of the Rayleigh number $\raN$ that are of interest for mantle
convection are typically between $10^7$ and $10^8$
\cite{turcotteschubert2014,Blankenbachetal1989}.  However, the greater
value for $\raN$, the more computationally expensive the simulation,
and it is numerically unstable to start $T$ in a state such as
$T_0(x_1,x_3)\approx 1-x_3$ at large $\raN$.  We therefore select
logarithmically spaced $\raN$ values from $10^4$ to $10^9$ and run a
simulation for each $\raN$, one after another.  For the very first
simulation ($\raN = 10^4$), we set
$T_0(x_1,x_3) = 1 - x_3 + \varepsilon(x_1,x_3)$, as discussed above,
and evolve the systems forward until reaching a nearly steady state.
Thereafter, we set $T_0$ as the final $T$ from the previous simulation
and increment increase $\raN$.  This yields realistic initial conditions for
any given $\raN$ in the range considered. Figure
\ref{fig:basic-before-after} shows an example of $T_0$ and $\T_0$,
together with subsequent end states of $T$ and $\T$ for
$\raN \approx 5.22\times 10^7$.
\begin{figure}
\centering
\begin{subfigure}{.45\textwidth}
    \centering
    \includegraphics[width=\textwidth]{./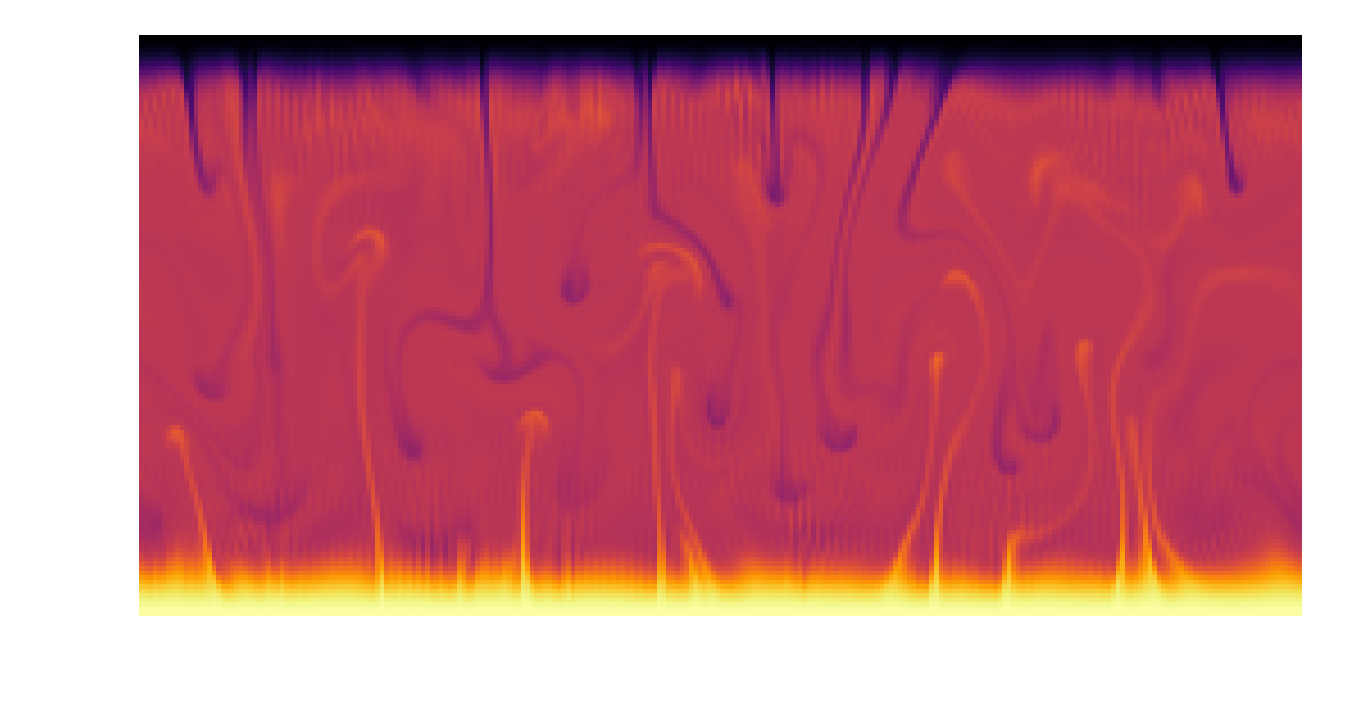}
\end{subfigure}
$\quad\longrightarrow$
\begin{subfigure}{.45\textwidth}
    \centering
    \includegraphics[width=\textwidth]{./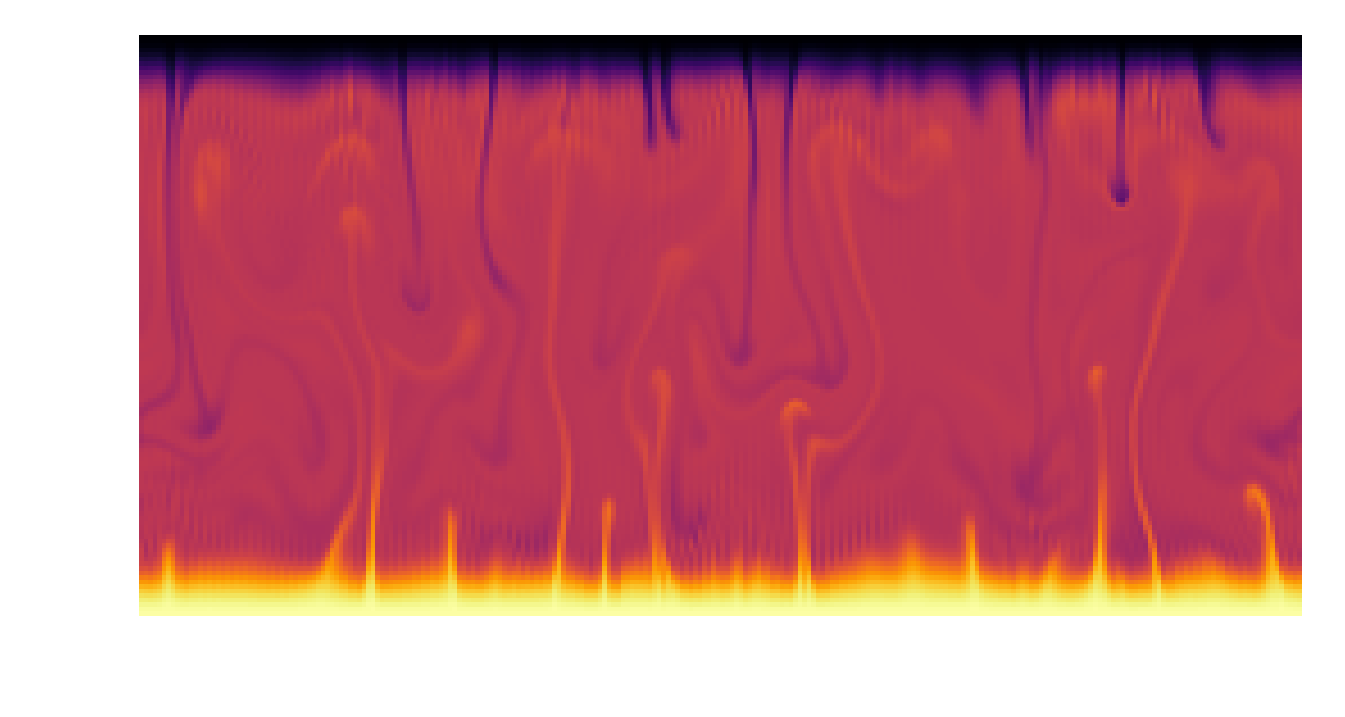}
\end{subfigure}
\\
\begin{subfigure}{.45\textwidth}
    \centering
    \includegraphics[width=\textwidth]{./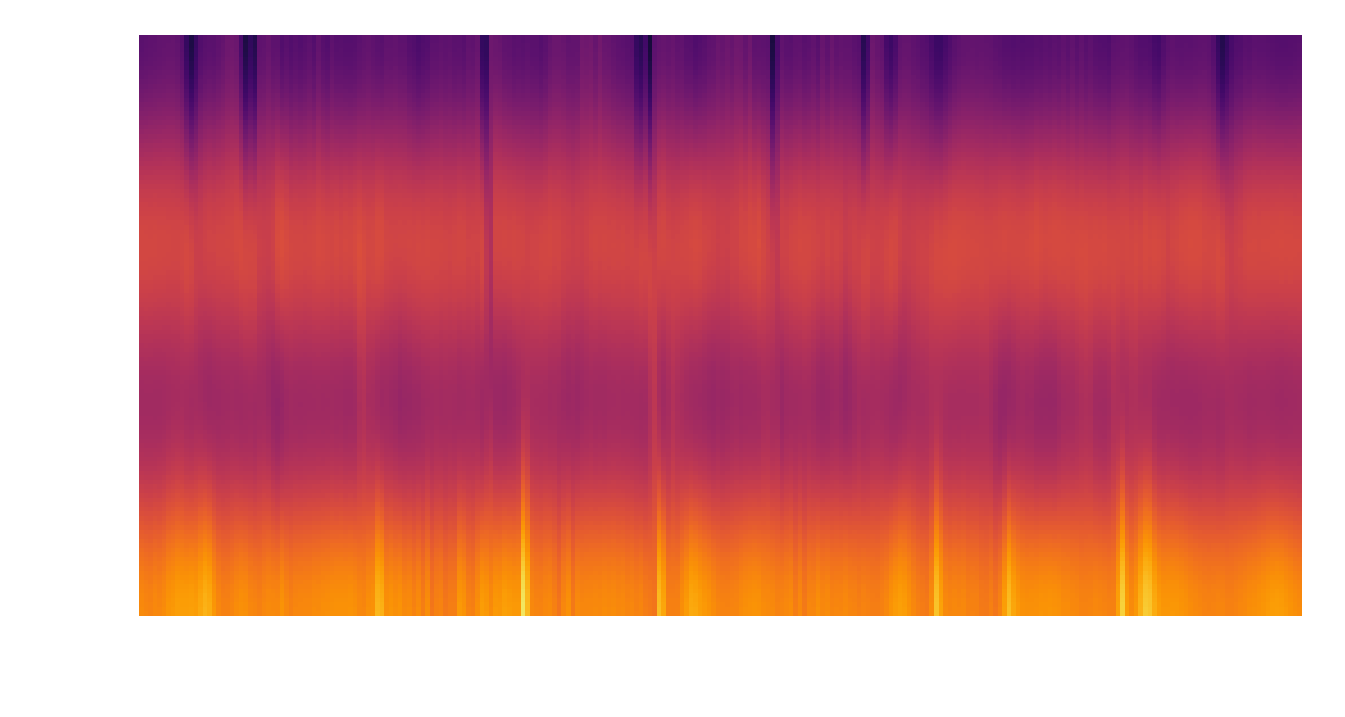}
\end{subfigure}
$\quad\longrightarrow$
\begin{subfigure}{.45\textwidth}
    \centering
    \includegraphics[width=\textwidth]{./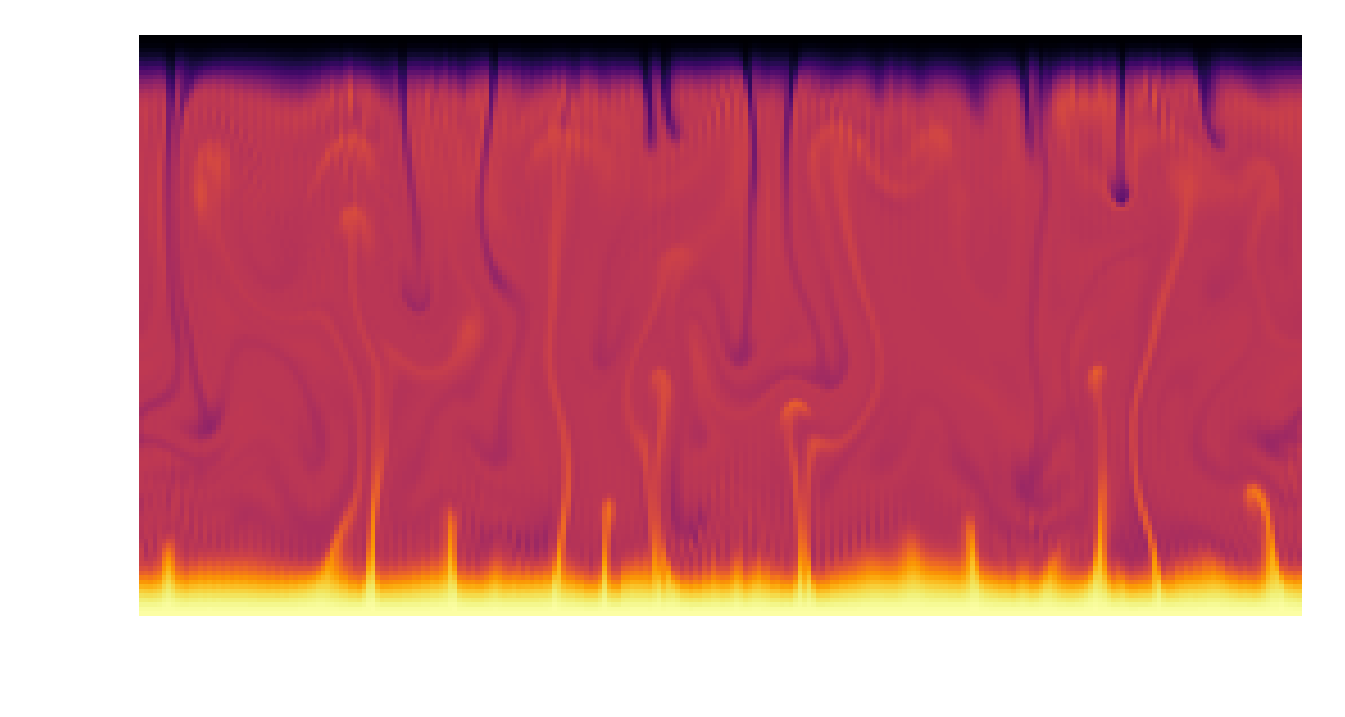}
\end{subfigure}
\caption[Typical ``before'' and ``after'' snapshots of a successfully
assimilated setup.]{Typical ``before'' and ``after'' snapshots of a
  successfully assimilated setup.  On the top, the initial condition
  $T_0$ compared with $T$ at the end of a simulation.  On the bottom,
  the initial projected temperature $\T_0 = P_4(T_0)$ and the eventual
  $\T$ at the end of the simulation.  Note that the post-simulation
  $T$ and $\T$ appear identical to the naked eye.  This particular
  simulation uses $\raN \approx 5.22\times 10^7$, $\mu=12,700$, and
  $N=32$.}
\label{fig:basic-before-after}
\end{figure}

To measure the synchronization of (\ref{system:data-infinite}) and
(\ref{system:nudge-infinite}), we keep track of the $L^2(\Omega)$,
$H^1(\Omega)$, and $H^2(\Omega)$ norms of $\T - T$ and $\bfUT - \bfU$
throughout the simulation.  Each error is normalized by dividing by
the norms of the truth system's variables.  For example, to measure
the temperature difference in $L^2(\Omega)$, we compute
$\frac{\|(\T - T)(t)\|}{\|T(t)\|}$ at each simulation time $t$.  For
simplicity, we write these errors without the denominators in all that
follows.

Figure \ref{fig:basic-convergence} shows how the tracked error norms
decrease in the simulation from Figure \ref{fig:basic-before-after}.
Though the analysis indicates that these norms should converge to
$\varepsilon_{\text{machine}}\approx 10^{-16}$, seeing the errors
decrease to about $10^{-9}$ is numerically satisfactory in this
setting, given the format in which the norms are calculated, i.e. some
errors do accumulate in the calculation of the norm itself.
\begin{figure}
\centering
\includegraphics[width=\textwidth]{./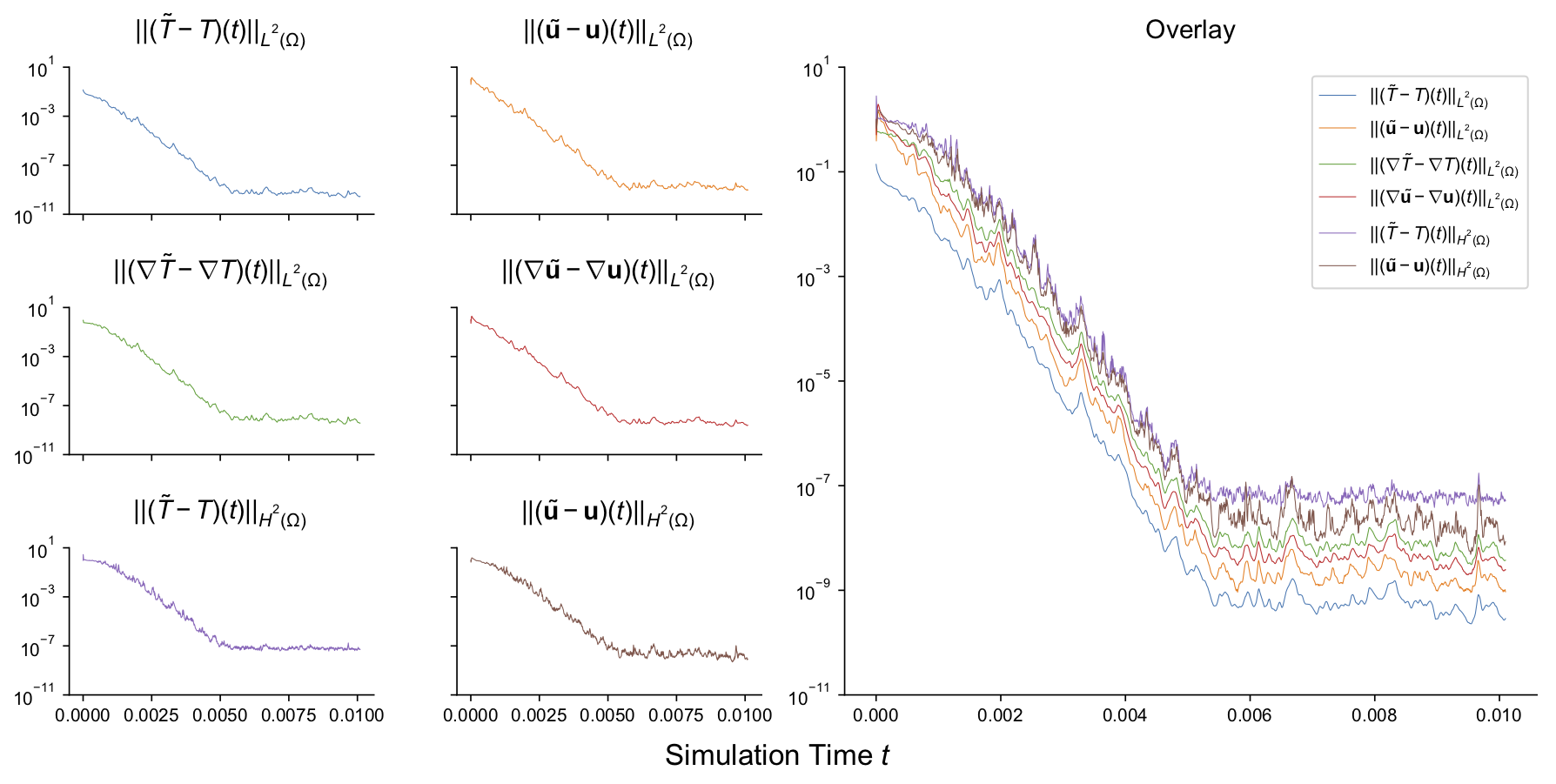}
\caption[Synchronization in various norms for a typical numerical
experiment with $\prN = \infty$.]{Synchronization in various norms for
  $\raN \approx 5.22\times 10^7$, $\mu=12,700$, and $N=32$ (the same
  conditions used in Figure \ref{fig:basic-before-after}).  The
  temperature and velocity differences decrease exponentially until
  flattening out at sufficiently small values.  The temperature
  difference is the smallest, which seems to be a consequence of using
  temperature-only observations for the assimilation.}
\label{fig:basic-convergence}
\end{figure}

\subsubsection{Dependencies of nudging parameter} 

Previous studies, \cite{AltafTitiGebraelKnioZhaoMcCabeHoteit15,
  FarhatJohnstonJollyTiti18}, of data assimilation for
Rayleigh-B\'enard convection fixed the value of the nudging parameter
at $\mu = 1$ in their experiments and instead focused on the effects of
other parameters, such as the number of projected modes.  However,
Theorem \ref{thm:infty} requires that $\mu$ be proportional to
$\raN^2$, so we do not expect synchronization with $\mu = 1$ as $\raN$
increases.  Indeed, Figure \ref{fig:bad-mu} shows that for $\raN$ in
the range of $4\times 10^7$ to $2\times 10^8$, using $\mu = 1$ results
in either extremely slow convergence, or in no convergence at all.

\begin{figure}
\centering
\includegraphics[width=.7\textwidth]{./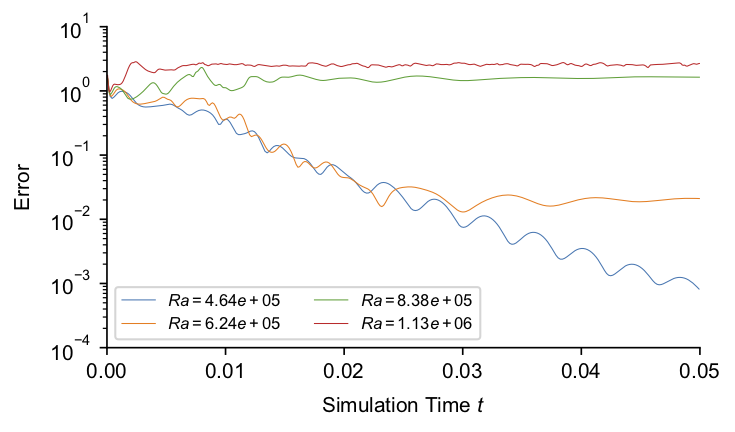}
\caption[Convergence (or lack thereof) with $\mu = 1$ and various
$\raN$.]{Convergence (or lack thereof) of
  $\|(\bfUT-\bfU)(t)\|_{H^2} + \|(\T-T)(t)\|_{H^2}$
  with $\mu = 1$ and $N = 32$ for various values of $\raN$.  For lower
  $\raN$, setting $\mu = 1$ is sufficient to slowly induce
  synchronization; however, as $\raN$ increases, $\mu = 1$ quickly
  becomes too weak to nudge the assimilating system toward the truth.}
\label{fig:bad-mu}
\end{figure}

To explore the relationship between $\raN$, $\mu$, and $N$ needed for
synchronization to occur, we fix two of the parameters at a time and
run several simulations with various values for the remaining
parameter.  First, we fix $\raN$, set $N = 32$, and vary $\mu$ until
synchronization can be observed within $0.005$ units of simulation
time (a simulation of this length usually requires several thousand
iterations).  Table \ref{table:mu_values} records the smallest value
of $\mu$ where synchronization was observed; Figure \ref{fig:vary-mu}
shows the convergence rates for a few different $\mu$ values when
$\raN \approx 3.89\times 10^7$ and $\raN\approx 7.02\times 10^7$.

\begin{figure}
\centering
\begin{subfigure}{.49\textwidth}
    \centering
    \includegraphics[width=\textwidth]{./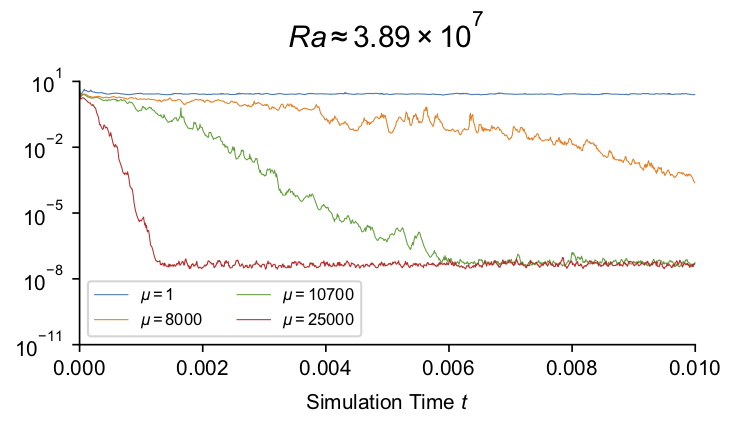}
\end{subfigure}
\begin{subfigure}{.49\textwidth}
    \centering
    \includegraphics[width=\textwidth]{./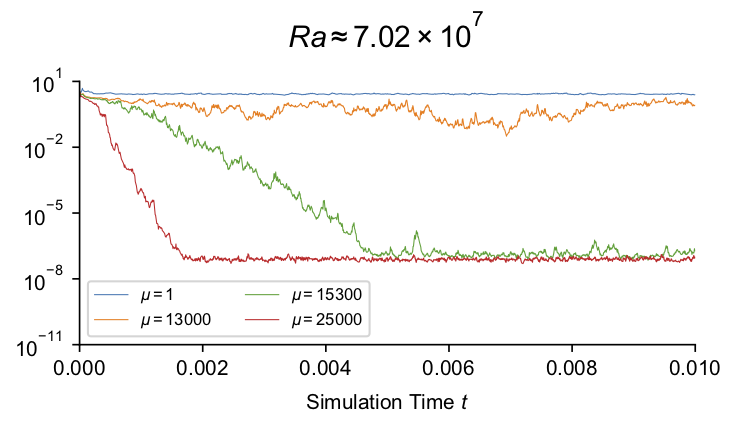}
\end{subfigure}
\caption[Synchronization with $\prN = \infty$, $N = 32$, and various
$\mu$ for two different values of
$\raN$.]{$\|(\bfUT-\bfU)(t)\|_{H^2} +
  \|(\T-T)(t)\|_{H^2}$
  with $N = 32$ and various $\mu$ for two different values of $\raN$.
  We begin to see satisfactory convergence around $\mu = 10,700$ and
  $\mu=15,300$ for $\raN \approx 3.89\times 10^7$ and
  $\raN \approx 7.02\times 10^7$, respectively.}
\label{fig:vary-mu}
\end{figure}

\begin{table}
\centering
\begin{tabular}{r||r}
\multicolumn{1}{c||}{$\raN$} & \multicolumn{1}{c}{$\mu$} \\ \hline
$1.1937766 \times 10^7$ &  $6,300$ \\
$1.6037187 \times 10^7$ &  $6,500$ \\
$2.1544346 \times 10^7$ &  $7,900$ \\
$2.8942661 \times 10^7$ &  $9,400$ \\
$3.8881551 \times 10^7$ & $10,700$ \\
$5.2233450 \times 10^7$ & $12,700$ \\
$7.0170382 \times 10^7$ & $15,300$ \\
$9.4266845 \times 10^7$ & $19,400$ \\
$1.26638017\times 10^8$ & $24,900$ \\
$1.70125427\times 10^8$ & $29,900$
\end{tabular}
\caption[Minimal values of $\mu$ that result in synchronization within about 
0.005 units of simulation time for various $\raN$ with $N = 32$.]{Minimal values 
  of $\mu$ that result in synchronization within about 0.005 units of simulation 
  time for the given $\raN$ with $N = 32$.  These values represent the edge of what 
  works when $N = 32$: a lower $\mu$ may not stimulate convergence, but a larger 
  $\mu$ will.  See Figure \ref{fig:mu_fit-N_32} for a quadratic least-squares fit 
  of this data.}
\label{table:mu_values}
\end{table}

Given the requirement \eqref{eq:mu-requirement} from Theorem
\ref{thm:infty}, it is surprising to discover that---based on Table
\ref{table:mu_values}---the relationship between $\mu$ and $\raN$ is
more linear than quadratic.  Indeed, a least squares fit of the data
to a general quadratic of the form $f(x) = a + bx + cx^2$ yields
coefficients of $a \approx 4.03\times 10^3$,
$b\approx 1.77\times 10^{-4}$, and $c\approx -1.37\times 10^{-13}$
(essentially $c = 0$).  See Figure \ref{fig:mu_fit-N_32}.

\begin{figure}[H]
\centering
\includegraphics[width=.7\textwidth]{./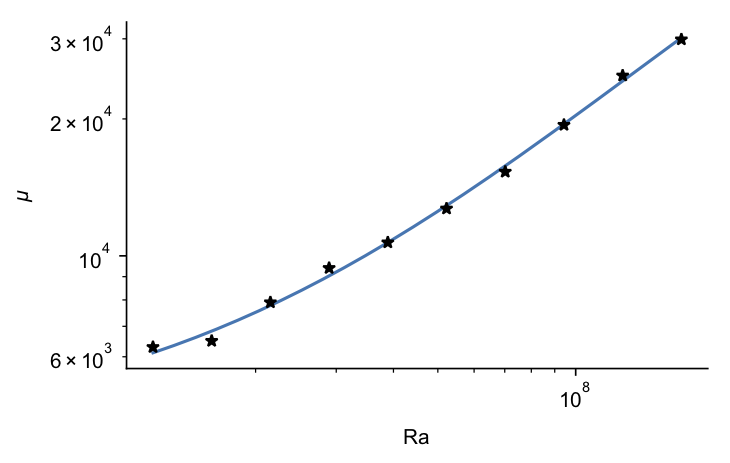}
\caption{Quadratic (but nearly linear) least-squares fit for the data in Table \ref{table:mu_values}.}
\label{fig:mu_fit-N_32}
\end{figure}

\subsubsection{Relation between relaxation and number of observables} 

The relaxation parameter $\mu$ is a system parameter without a clear
physical interpretation.  The number of projected Fourier modes $N$,
on the other hand, indicates the amount of data that is ``visible'' to
the assimilating system.  Therefore, a lower bound on $N$ represents
how much data is required in order to maintain an accurate model.
Fixing $\mu$ as given by Table \ref{table:mu_values}, we decrease $N$
to see how it affects synchronization.  See Figure \ref{fig:vary-N}.
Note that as expected, the less modes retained in the projection, the
slower synchronization is achieved, and if a sufficiently low number
of modes ($N \leq 6$) is observed then synchronization never
occurs.  In Table~\ref{table:N_values} we summarize our findings
concerning the number of modes $N$ needed as a function of $\raN$
(given a suitably large choice for $\mu$).

\begin{figure}
\centering
\begin{subfigure}{.49\textwidth}
    \centering
    \includegraphics[width=\textwidth]{./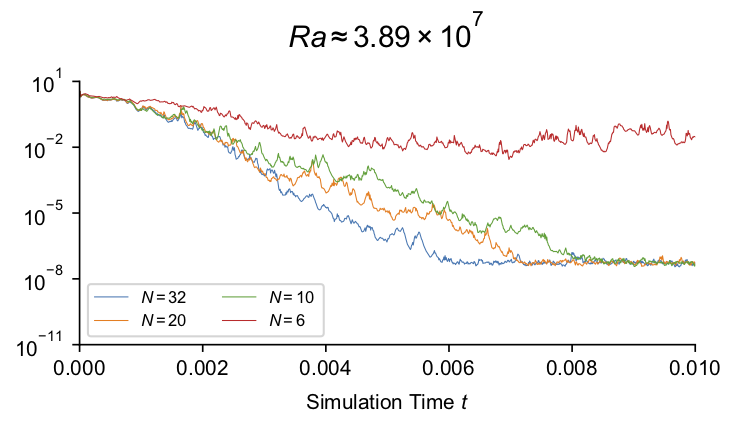}
\end{subfigure}
\begin{subfigure}{.49\textwidth}
    \centering
    \includegraphics[width=\textwidth]{./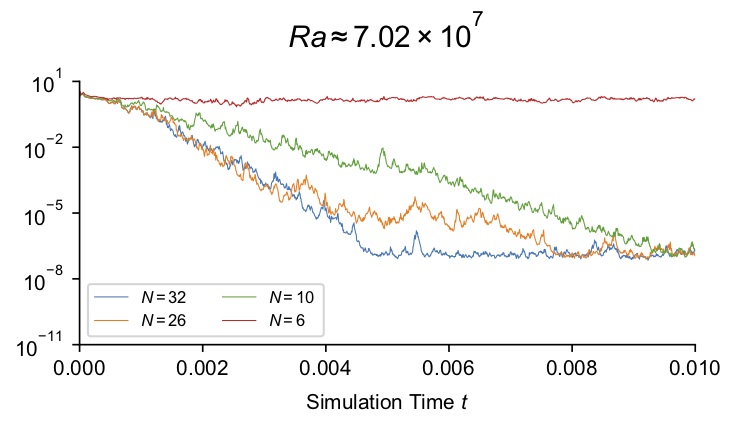}
\end{subfigure}
\caption[Synchronization for two different values of $\raN$, with $\mu$ as listed 
in Tables \ref{table:mu_values} and \ref{table:N_values} and for various values 
of $N$.]{$\|(\bfUT-\bfU)(t)\|_{H^2} + \|(\T-T)(t)\|_{H^2}$ for two 
  different values of $\raN$, with $\mu$ as listed in Tables \ref{table:mu_values} 
  and \ref{table:N_values} and for various values of $N$. For both 
  $\raN \approx 3.89\times 10^7$ and $\raN \approx 7.02\times 10^7$, $N=10$ 
  appears to be the least number of modes that results in synchronization.}
\label{fig:vary-N}
\end{figure}

\begin{table}[H]
\centering
\begin{tabular}{r|r||c}
\multicolumn{1}{c|}{$\raN$} & \multicolumn{1}{c||}{$\mu$} & $N$ \\ \hline
$1.1937766 \times 10^7$ &  $6,300$ & $6$ \\
$1.6037187 \times 10^7$ &  $6,500$ & $8$ \\
$2.1544346 \times 10^7$ &  $7,900$ & $8$ \\
$2.8942661 \times 10^7$ &  $9,400$ & $8$ \\
$3.8881551 \times 10^7$ & $10,700$ & $10$ \\
$5.2233450 \times 10^7$ & $12,700$ & $10$ \\
$7.0170382 \times 10^7$ & $15,300$ & $10$ \\
$9.4266845 \times 10^7$ & $19,400$ & $10$ \\
$1.26638017\times 10^8$ & $24,900$ & $12$ \\
$1.70125427\times 10^8$ & $29,900$ & $14$
\end{tabular}
\caption[Minimal values of $N$ that result in synchronization within 
about 0.005 units of simulation time for the $\raN$ and $\mu$ 
in Table \ref{table:mu_values}.]{Minimal values of $N$ that result 
  in synchronization within about 0.01 units of simulation time for 
  the given $\raN$ and $\mu$.  In general, as long as an appropriate $\mu$ 
  is chosen, $N = 16$ is sufficient for physically relevant values of $\raN$.}
\label{table:N_values}
\end{table}

\subsubsection{Summary} 

The numerical simulations verify that over physically relevant values
of $\raN$, we can pick $\mu$ large enough to nudge
\eqref{system:data-infinite} toward \eqref{system:nudge-infinite}; in
particular, the experiments confirm Theorem \ref{thm:infty} and show
that the conditions stated therein are stronger than necessary.
Furthermore, synchronization can be obtained with a relatively low
number of modes even for large $\raN$, as long as $\mu$ is chosen
large enough.  We took the approach of varying $\mu$ first and then
$N$, but it is just as feasible to vary $\mu$ for a fixed $\raN$ and
$N$.

\section{Finite-Prandtl Assimilation} 
\label{section:finite-prandtl-assimilation}
We now turn to the case when $\prN$ is finite, but large. Thus, throughout this section, we will assume that $\prN<\infty$, so that we automatically have that $(F1)-(F4)$ holds.

 As in
Section \ref{section:infinite-prandtl-assimilation}, we first
state the relevant well-posedness result for the nudged equation. We then rigorously establish synchronization of the nudged signal with the
true signal, then proceed with a numerical study for the two-dimensional setting.

It is not immediately clear how to properly adapt the argument for
\eqref{system:data-infinite}--\eqref{system:nudge-infinite} in
\cite{Temam1997, XWang05} to establish a maximum principle (see Lemma
\ref{lem:mp}) for the assimilating variable $\T$ due to the presence
of the additional term $-\mu P_N(\T-T)$ in
\eqref{system:nudge-infinite} (cf. \cite{JollyMartinezSadigovTiti18}
for a case where it is necessary to develop a maximum principle in
spite of this term).  However, since we are assuming $(\bfU,T)$ is a
regular solution to \eqref{nd:Bou}--\eqref{nd:bc}, it is
straightforward to verify global existence and uniqueness for the
associated nudged system by instead considering the corresponding
system for the difference $(\bfW, S)$, where $\bfW=\bfUT-\bfU$ and
$S=\T-T$.  In this setting, we need only appeal to a maximum principle
for $T$, rather than for $\T$.  The well-posedness of the nudged
system then follows in a standard fashion, so that we opt to only
state the result here and refer the reader to \cite{FarhatJollyTiti15}
for the appropriate details.

\begin{Thm}\label{thm:finite-nudge}
Let $\mu>0$ and $\{\lam_n\}_{n=1}^\infty$ be as in \req{eq:eigenpairs}.  Let $(\bfUT_0,\T_0-(1-x_d))\in\mathcal{V}\times V$. Suppose $N>0$ satisfies $\frac{1}4\lam_N\geq\mu$. Then there exists a unique solution $(\bfUT,\T)$ satisfying \req{eq:B:DA} such that
	\[
		(\bfUT,\til{T}-(1-x_d))\in L^\infty(0,\tau;\mathcal{V})\times L^\infty(0,\tau;V),
	\]
for all $\tau\geq0$.
\end{Thm}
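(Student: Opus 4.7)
The strategy is the one hinted at by the authors: I would prove well-posedness not for \req{eq:B:DA} itself but for the equivalent system governing the differences
\begin{align*}
    \bfW := \bfUT - \bfU,\qquad S := \T - T,\qquad q := \tp - p,
\end{align*}
where $(\bfU, T, p)$ is the reference regular solution supplied by $(F1)$--$(F4)$. Subtracting \req{nd:Bou} from \req{eq:B:DA} and using $\del\cdot\bfW = 0$ yields
\begin{align*}
    \tfrac{1}{\prN}\bigl[\bdy_t \bfW + (\bfUT\cdot\del)\bfW + (\bfW\cdot\del)\bfU\bigr] - \De\bfW + \del q
        &= \raN\bfe_d S,\\
    \bdy_t S + \bfUT\cdot\del S + \bfW\cdot\del T - \De S
        &= -\mu P_N S,
\end{align*}
with \emph{homogeneous} Dirichlet--periodic boundary conditions on $(\bfW, S)$ and initial data $(\bfW_0, S_0)\in\mathcal{V}\times V$. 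The desired conclusion is equivalent to showing $(\bfW, S)\in L^\infty(0,\tau;\mathcal{V}\times V)$, since $\T-(1-x_d) = S+(T-(1-x_d))$ and $T-(1-x_d)\in L^\infty(0,\tau;V)$ by $(F4)$. The structural advantage of this reformulation is that one needs only the maximum principle for the reference temperature $T$ (Lemma \ref{lem:mp}), and never for the unknown $\T$.

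I would construct approximate solutions $(\bfW^M, S^M)$ via a Galerkin truncation in the Laplace eigenbasis $\{\phi_n\}$ and derive uniform $\mathcal{V}\times V$ a priori bounds by testing the momentum equation against $-\De\bfW^M$ and the temperature equation against $-\De S^M$. The nudging contribution is controlled via the spectral gap hypothesis $\tfrac14\lam_N\geq\mu$: splitting $\lb\del P_N S,\del S\rb = \|\del S\|^2 - \|\del Q_N S\|^2$ and invoking the inverse Poincar\'e inequality $\|\del Q_N S\|^2 \leq \lam_N^{-1}\|\De Q_N S\|^2$, exactly as in \req{2d:poincare}, one absorbs the negative piece into the $2\|\De S\|^2$ diffusion and is left with a clean coercive $2\mu\|\del S\|^2$ on the left. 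The buoyancy coupling $\raN\lb\bfe_d S,-\De\bfW\rb$ is dispatched by Cauchy--Schwarz and Young's inequality.

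The main technical obstacle is the quadratic nonlinearity $(\bfW\cdot\del)\bfW$ hidden inside $(\bfUT\cdot\del)\bfW = (\bfU\cdot\del)\bfW + (\bfW\cdot\del)\bfW$: in dimension three this term must be closed against the $\|\De\bfW\|^2$ diffusion, and here the large-Prandtl hypothesis $(F1)$ together with the reference bounds $(F3)$--$(F4)$ is indispensable. I would dispatch the offending term via the three-dimensional Agmon-type interpolation $\|\bfW\|_{L^\infty} \leq C\|\bfW\|_{H^1}^{1/2}\|\bfW\|_{H^2}^{1/2}$ followed by Young's inequality, and control the cross terms $(\bfU\cdot\del)\bfW$, $(\bfW\cdot\del)\bfU$, $\bfU\cdot\del S$, $\bfW\cdot\del T$ via H\"older and the Sobolev embeddings $V\imb L^6$, $\mathcal{W}\imb L^\infty$ combined with $(F3)$--$(F4)$. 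The resulting differential inequality has the form $\ddt y + c\, z \leq \alpha(t)\, y + \beta(t)$ with $z = \|\De\bfW\|^2 + \|\De S\|^2$ and $\alpha,\beta\in L^1_{\text{loc}}(0,\tau)$, so Gr\"onwall's lemma delivers $M$-uniform estimates in $L^\infty(0,\tau;\mathcal{V}\times V)\cap L^2(0,\tau;\mathcal{W}\times W)$. Aubin--Lions compactness then permits passage to the limit $M\to\infty$ in the weak formulation, and uniqueness follows by an identical $H^1$-level energy estimate applied to the difference of two candidate solutions, with Gr\"onwall forcing coincidence when the initial data agree; the full technical details follow the template developed in \cite{FarhatJollyTiti15}.
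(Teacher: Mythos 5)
Your overall strategy is the one the paper itself adopts (and only sketches): pass to the difference variables $(\bfW,S)$ so that a maximum principle is only ever needed for the reference temperature $T$ rather than for $\T$, run a Galerkin scheme in the eigenbasis, close a priori estimates using the regularity bounds $(F3)$--$(F4)$, and defer routine details to \cite{FarhatJollyTiti15}. In that sense you are on the paper's track. However, there is a concrete gap in the way you claim the $H^1$-level estimate closes when $d=3$. Testing the momentum equation for $\bfW$ with $-\De\bfW$ and treating the self-interaction by Agmon plus Young gives
\begin{align*}
\frac{1}{\prN}\left|\lb(\bfW\cdot\del)\bfW,\De\bfW\rb\right|
\leq \frac{C}{\prN}\|\del\bfW\|^{3/2}\|\De\bfW\|^{3/2}
\leq \frac{1}{4}\|\De\bfW\|^{2}+\frac{C}{\prN^{4}}\|\del\bfW\|^{6},
\end{align*}
so the resulting differential inequality is \emph{cubic} in $y=\|\del\bfW\|^{2}$, not of the linear form $\ddt y + c\,z\leq \alpha(t)y+\beta(t)$ with $\alpha\in L^1_{\text{loc}}$ that you assert. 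To absorb $\|\del\bfW\|^{4}$ into $\alpha(t)$ you would need $\bfW\in L^{4}(0,\tau;H^1)$ a priori, but the preliminary $L^2$-level estimate (where the self-interaction vanishes) only yields $L^{2}(0,\tau;H^1)$ in three dimensions; this is exactly the familiar 3D obstruction, and plain Gronwall does not close it.

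The missing ingredient is a quantitative use of the large-Prandtl structure, which you call ``indispensable'' but never actually deploy: in the spirit of the analysis behind Theorem \ref{prop:3d:Bou}$(ii)$, the $1/\prN$ prefactor on the inertia lets one view the $\bfW$-equation as a perturbed Stokes problem, and a nonlinear ODE comparison (bootstrap/continuity) argument---keeping $\|\del\bfW\|^{2}$ in a regime where $\frac{C}{\prN^{3}}\|\del\bfW\|^{4}\leq \tfrac12\prN\lambda_1$, which is where $(F1)$ and, in general, a relation between $\prN$, $\raN$ and the size of the data enter---is what prevents the cubic term from driving blow-up; alternatively one restricts to $d=2$, where the Ladyzhenskaya estimate closes the $H^1$ bound by genuine Gronwall (this is the setting of \cite{FarhatJollyTiti15}). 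So the step ``Agmon $+$ Young $+$ Gronwall'' fails as written in $d=3$ and must be replaced by such an argument; the remainder of your outline (the $L^2$ energy estimate for $(\bfW,S)$ using $(F3)$--$(F4)$, the spectral treatment of $-\mu P_N S$, Aubin--Lions passage to the limit, and uniqueness against the strong reference solution) is consistent with what the paper intends.
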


\subsection{Synchronization} 

We will establish the finite Prandtl analog to Theorem \ref{thm:infty}. For this, it will be convenient to establish the following stability estimate first.

\begin{Lem}\label{lem:stab:3d}

Assuming the conditions of Theorem \ref{thm:finite-nudge}, there exists a constant $C_0=C_0(\Om)>0$ such that if
    \begin{align}\label{3d:lam:hyp}
	 \mu\geq \left( \frac{1}{2} + \raN^2\right)\prN,
    \end{align}
then
\begin{align} 
  &\|(\bfUT-\bfU)(t)\|^2+\|(\T-T)(t)\|^2\notag\\
      &\leq  \exp \left( - t\prN +C_0 \int_0^t
        \left( \frac{1}{\prN^3}\|\del\bfU(s)\|^4
          + \frac{1}{\prN}\|T(s)\|_{L^6}^4\right) ds\right) 
          ( \|\bfUT_0-\bfU_0\|^2+ \|\T_0-T_0\|^2),
        \label{eq:f:p:syncon:ineq}
\end{align}
holds for $t\geq0$.
\begin{proof}
  Similar to the analysis performed for Theorem \ref{thm:infty}, we
  consider the solution $(\bfW,S)$ to the difference system
\begin{align}
  \label{eq:diff:DA:2}
  \begin{gathered}
      \frac{1}{\prN}\left[\partial_t\bfW + (\bfUT\cdot\del)\bfW 
        + (\bfW\cdot\del)\bfU\right] - \Delta\bfW 
      = -\nabla q + \raN\mathbf{e}_3 S,\quad  \nabla\cdot\bfW = 0,\\
    \partial_t S + \bfUT \cdot \nabla S + \bfW \cdot \nabla T - \Delta S 
    = -\mu P_N S,\\
    \bfW|_{x_3 = 0} = \bfW|_{x_3 = 1} = \mathbf{0},
      \quad S|_{x_3 = 0} = S|_{x_3 = 1} = 0,
      \quad \bfW, S\textrm{ are $L$-periodic in }x_1 \textrm{ and } x_2.
  \end{gathered}
\end{align}
Then, multiplying the second equation by $S$ integrating and arguing as
in \eqref{2d:poincare} we obtain
\begin{align}
   \ddt \|S\|^2+ 2 \|\del S\|^2  + 2 \mu\|S\|^2 
  &\leq2\left|\int \bfW \cdot \nabla T Sdx\right|
    =:I.
   \label{eq:DA:est:5}
\end{align}
On the other hand, an analogous energy calculuation for $\bfW$ yields
\begin{align}
  \ddt \|\bfW\|^2 &+ 2\prN\|\del\bfW\|^2 + 2\prN\|\bfW\|^2  \notag\\
         &\leq 2 \left|\int \bfW \cdot \nabla \bfU \cdot \bfW dx \right| 
           + 2\raN \prN \left| \int  {\bfe_3} S \cdot \bfW dx \right|
           =:II+III.
     \label{eq:DA:est:6}
\end{align}

We estimate the right hand side of \eqref{eq:DA:est:5} using
H\"older's inequality, the Gagliardo-Nirenberg interpolation inequality, and Young's
inequality to obtain
\begin{align*} 
  I
     \leq& 2\| \bfW\|_{L^3}\|\del S\|\|T\|_{L^6} 
     \leq C  \|\bfW\|^{1/2} \|\del\bfW\|^{1/2}\|\del S\|\|T\|_{L^6} \notag\\
     \leq &C  \|T\|_{L^6}^2\|\bfW\|\|\del\bfW\|+ \|\del S\|^2 
    \leq \frac{C}{\prN} \|T\|_{L^6}^4\|\bfW\|^2 +
            \frac{\prN}{2}\|\del\bfW\|^2 
            +\|\del S\|^2.
\end{align*}
Next, for $II$ in \eqref{eq:DA:est:6}, we have from H\"older's
inequality, the Sobolev embedding $L^6(\Omega)\imb H^1$, Gagliardo-Nirenber interpolation inequality, and Young's inequality that
\begin{align*}
    II
    &\leq 2\| \bfW\|_{L^6}   \|\del\bfU\| \|\bfW\|_{L^3} 
    \leq C\|\del\bfW\|^{3/2}\|\bfW\|^{1/2} \|\del\bfU\|\notag\\
    &\leq  \frac{\prN}{2}\|\del\bfW\|^2+ \frac{C}{\prN^3}\|\del\bfU\|^4\|\bfW\|^2.
\end{align*}
For $III$, we use the Cauchy-Schwarz inequality to obtain
\begin{align}
III \leq 2\raN^2 \prN\|S\|^2 + \frac{\prN}{2}\|\bfW\|^2.
\label{eq:DA:est:8}
\end{align}

Combining the bounds for $I$-$III$, we have
\begin{align*}
 \ddt(  \|\bfW\|^2 +  \|S\|^2) + \prN \|\bfW\|^2 + 2 (\mu - \raN^2 \prN)\|S\|^2
 \leq
 \left(  \frac{C}{\prN^3} \|\del\bfU\|^4 + \frac{C}{\prN}\|T\|_{L^6}^4\right) \|\bfW\|^2.
\end{align*}
Thus, by the second condition in \eqref{3d:lam:hyp}, it follows that
\begin{align*}
 \ddt(  \|\bfW\|^2+  \|S\|^2) + 
  \left( \prN - \frac{C}{\prN^3} \|\del\bfU\|^4 - \frac{C}{\prN}\|T\|_{L^6}^4\right) 
  (\|\bfW\|^2  +\|S\|^2 )\leq  0.
\end{align*}
Therefore, by Gronwall's inequality the desired bound
\eqref{eq:f:p:syncon:ineq} now follows.
\end{proof}
\end{Lem}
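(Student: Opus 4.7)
The plan is to perform coupled $L^2$ energy estimates on the difference system for $(\bfW,S) := (\bfUT-\bfU, \T-T)$, add them together to form a single dissipative inequality of the form $\ddt(\|\bfW\|^2+\|S\|^2) + (\text{stuff})(\|\bfW\|^2+\|S\|^2) \leq 0$, and then close with Gronwall. I would begin by subtracting \req{nd:Bou} from \req{eq:B:DA} to get the system for $(\bfW,S)$, noting that $\bfW$ satisfies a Navier--Stokes-type equation with $\raN\bfe_3 S$ as forcing and an extra shear term $(\bfW\cdot\del)\bfU$, while $S$ satisfies a transport--diffusion equation with source $\bfW\cdot\del T$ and damping $-\mu P_N S$ from the nudging.

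For the temperature difference, I would test the $S$-equation against $S$; the advection term vanishes by $\del\cdot\bfUT=0$ and the zero boundary conditions, leaving
\begin{align*}
\tfrac{1}{2}\ddt\|S\|^2 + \|\del S\|^2 + \mu\|P_N S\|^2 = -\int \bfW\cdot\del T\, S\, d\bfx.
\end{align*}
Just as in the proof of Theorem \ref{thm:infty}, the hypothesis $\tfrac{1}{4}\lambda_N \geq \mu$ lets me trade $\mu\|P_N S\|^2$ for $\mu\|S\|^2$ at the cost of a fraction of $\|\del S\|^2$. The right-hand side I would estimate by $\|\bfW\|_{L^3}\|T\|_{L^6}\|\del S\|$, then apply the Gagliardo--Nirenberg interpolation $\|\bfW\|_{L^3}\lesssim \|\bfW\|^{1/2}\|\del\bfW\|^{1/2}$ and Young's inequality to produce three pieces: a $\tfrac{\prN}{2}\|\del\bfW\|^2$ to be absorbed by the velocity estimate, a $\|\del S\|^2$ absorbed here, and a surviving $\tfrac{C}{\prN}\|T\|_{L^6}^4\|\bfW\|^2$.

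For the velocity difference, I would clear the $1/\prN$ factor and test against $\bfW$; the transport term $((\bfUT\cdot\del)\bfW,\bfW)$ vanishes, so
\begin{align*}
\ddt\|\bfW\|^2 + 2\prN\|\del\bfW\|^2 \leq 2\Big|\!\int (\bfW\cdot\del)\bfU\cdot\bfW\,d\bfx\Big| + 2\raN\prN\Big|\!\int S\,\bfe_3\cdot\bfW\,d\bfx\Big|.
\end{align*}
For the shear term I would use $\|\bfW\|_{L^6}\|\del\bfU\|\|\bfW\|_{L^3}$, Sobolev $\|\bfW\|_{L^6}\lesssim\|\del\bfW\|$, Gagliardo--Nirenberg again, and Young to peel off an absorbable $\tfrac{\prN}{2}\|\del\bfW\|^2$ and leave $\tfrac{C}{\prN^3}\|\del\bfU\|^4\|\bfW\|^2$; for the buoyancy coupling, plain Cauchy--Schwarz/Young gives $2\raN^2\prN\|S\|^2 + \tfrac{\prN}{2}\|\bfW\|^2$. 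It is precisely the $\raN^2\prN\|S\|^2$ piece that the standing assumption $\mu \geq (\tfrac{1}{2}+\raN^2)\prN$ is tailored to dominate. Summing the two inequalities, using Poincar\'e to bury $\tfrac{\prN}{2}\|\bfW\|^2$ inside $\prN\|\del\bfW\|^2$ and the $\mu$-hypothesis on the $S$ side, collapses the estimate to $\ddt(\|\bfW\|^2+\|S\|^2) + \big(\prN - \tfrac{C}{\prN^3}\|\del\bfU\|^4 - \tfrac{C}{\prN}\|T\|_{L^6}^4\big)(\|\bfW\|^2+\|S\|^2)\leq 0$, and Gronwall delivers \eqref{eq:f:p:syncon:ineq}.

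The fragile point I expect to be the main obstacle is the handling of $(\bfW\cdot\del)\bfU\cdot\bfW$ in 3D: at the $L^2$ level for $\bfW$ one has no better recourse than the critical-scaling interpolation $\|\bfW\|_{L^3}\lesssim\|\bfW\|^{1/2}\|\del\bfW\|^{1/2}$, and it is exactly this that forces $\|\del\bfU\|^4$ integrability in time together with the peculiar $\prN^{-3}$ weight that appears in the exponent of \eqref{eq:f:p:syncon:ineq}. Calibrating the Young's-inequality constants so that the two shear estimates each consume exactly $\tfrac{\prN}{2}\|\del\bfW\|^2$ while leaving an intact $\prN\|\bfW\|^2$ of dissipation (after Poincar\'e) is the delicate bookkeeping, and relying on the quantitative bounds in $(F3)$--$(F4)$ to ensure the integrand in the exponent is well-defined is how the standing hypotheses enter.
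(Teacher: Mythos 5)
Your proposal is correct and follows essentially the same route as the paper's proof: the same coupled $L^2$ estimates on $(\bfW,S)$, the same critical interpolation $\|\bfW\|_{L^3}\lesssim\|\bfW\|^{1/2}\|\del\bfW\|^{1/2}$ for both the $\bfW\cdot\del T$ and $(\bfW\cdot\del)\bfU$ terms (yielding the $\prN^{-1}\|T\|_{L^6}^4$ and $\prN^{-3}\|\del\bfU\|^4$ weights), the same Cauchy--Schwarz treatment of the buoyancy term absorbed by the hypothesis $\mu\geq(\tfrac12+\raN^2)\prN$, and the same Poincar\'e/Gronwall closure. The only differences are cosmetic bookkeeping (where the Poincar\'e step is invoked), so no further comment is needed.
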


\begin{Thm}\label{thm:finite}
Assume the conditions of Lemma \ref{lem:stab:3d}.  There exists a constant $C_1=C_1(\Om)>0$ such that if
\begin{align}\label{pr:large:3d}
	\prN\geq C_1\raN^5,
\end{align}
then
\begin{align}
  \|(\bfUT-\bfU)(t)\|+\|(\T-T)(t)\|  \leq C_2e^{-(\prN/4) t},
  \label{eq:finite:pr:sync}
\end{align}
holds for all $t\geq0$, where $C_2=\|\bfUT_0-\bfU_0\|+ \|\T_0-T_0\|$.
\begin{proof}
 By $(F3)$ of the Standing Hypotheses, it follows that
\begin{align}\label{est:stab2}
  \frac{C}{\prN^3}\int_0^t\|\del\bfU(s)\|^4ds
        \leq \frac{C\kap_1^4}{\prN^3}\raN^4 t.
\end{align}
By $(F4)$ and the Sobolev embedding,
  $H^1(\Omega)\imb L^6(\Omega)$, it follows that
\begin{align}\label{est:stab1}
    \frac{C}{\prN}\int_0^t \|T(s)\|_{L^6}^4ds 
             \leq \frac{C(\kap_1')^{6}}{\prN}\raN^{10}t.
\end{align}
Now upon combining \eqref{est:stab2}, \eqref{est:stab1}, it follows that there exists
$C=C(\Om)>0$ such that
\begin{align}
  \exp \left(C 
        \int_0^t\left( \frac{1}{\prN^3}\|\del\bfU(s)\|^4
                + \frac{1}{\prN}\|T(s)\|_{L^6}^4\right) 
  ds\right)
  \leq \exp\left(C\frac{\raN^4}{\prN}
  \left(\raN^6+\frac{1}{\prN^2}\right) t\right).\notag
\end{align}
Then, having chosen $\prN$ according to \eqref{pr:large:3d}, upon taking square roots, we deduce
\eqref{eq:finite:pr:sync} from and Lemma
\ref{lem:stab:3d}, as desired.
\end{proof}
\end{Thm}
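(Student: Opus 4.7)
The plan is to directly apply Lemma \ref{lem:stab:3d} and to control the two time integrals appearing in the exponent by means of the a priori bounds supplied by the Standing Hypotheses (F3) and (F4). Both integrals depend polynomially on $\raN$ and inverse-polynomially on $\prN$, and linearly on $t$, so the smallness condition $\prN \geq C_1 \raN^5$ will be used to absorb them as a fraction of the leading $-t\prN$ dissipation term.

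First I would bound $\|\del \bfU(s)\| \leq \|\bfU(s)\|_{H^1} \leq \kap_1 \raN$ using (F3), uniformly in $s$, so that
\[
\frac{C}{\prN^3}\int_0^t \|\del\bfU(s)\|^4\,ds \leq \frac{C\kap_1^4 \raN^4}{\prN^3}\, t.
\]
For the temperature integral I would invoke the Sobolev embedding $H^1(\Om) \hookrightarrow L^6(\Om)$ (valid in both $d=2$ and $d=3$) together with the $H^1$ bound in (F4), namely $\|T(s)\|_{H^1} \leq \kap_1' \raN^{5/2}$, to get $\|T(s)\|_{L^6}^4 \leq C(\kap_1')^4 \raN^{10}$, and hence
\[
\frac{C}{\prN}\int_0^t \|T(s)\|_{L^6}^4\,ds \leq \frac{C(\kap_1')^4}{\prN}\raN^{10}\, t.
\]

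Substituting these into \eqref{eq:f:p:syncon:ineq}, the exponent is bounded above by
\[
-t\prN\Bigl(1 - \tfrac{C}{\prN^4}\raN^4 - \tfrac{C}{\prN^2}\raN^{10}\Bigr).
\]
The assumption $\prN \geq C_1 \raN^5$ (with $\raN \geq 1$) forces both correction terms inside the parentheses to be at most $1/4$ once $C_1$ is large enough: the dominant $\raN^{10}/\prN^2$ term is controlled because $\raN^{10}/\prN^2 \leq 1/C_1^2$, and the velocity term $\raN^4/\prN^4$ is then even smaller. The exponent is therefore at most $-t\prN/2$, and taking a square root converts the bound on $\|\bfW\|^2 + \|S\|^2$ into the claimed estimate \eqref{eq:finite:pr:sync} with decay rate $\prN/4$.

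The main obstacle is really just the bookkeeping that pins down $\raN^5$ as the sharp required scaling. It comes entirely from the $H^1$ regularity $\|T\|_{H^1} \lesssim \raN^{5/2}$ in (F4) propagated through the $L^6$ Sobolev embedding; the velocity integrand is genuinely subleading once $\prN \gtrsim \raN^5$. Nothing delicate is required beyond choosing $C_1$ large enough to absorb the products of constants from (F3), (F4), the embedding, and Lemma \ref{lem:stab:3d}. One should also verify that the standing requirement $\mu \geq (\tfrac{1}{2}+\raN^2)\prN$ from the hypothesis of the lemma is consistent with (F1); this simply imposes a joint lower bound on $\mu$ that is taken for granted under the statement ``assume the conditions of Lemma \ref{lem:stab:3d}.''
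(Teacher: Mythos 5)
Your proposal is correct and follows essentially the same route as the paper: bound the two Gronwall-exponent integrals via (F3) and (F4) with the $H^1\hookrightarrow L^6$ embedding, obtain growth of order $\raN^4/\prN^3 + \raN^{10}/\prN$ per unit time, absorb it into the $-t\prN$ dissipation using $\prN\geq C_1\raN^5$ (with $\raN\geq 1$), and take square roots to get the $e^{-\prN t/4}$ rate. Your bookkeeping is if anything slightly cleaner (e.g., the constant $(\kap_1')^4$ rather than the paper's $(\kap_1')^6$), and correctly identifies the $\raN^{10}/\prN^2$ term as the one dictating the $\raN^5$ scaling.
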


\begin{Rmk}
Observe that \eqref{pr:large:3d} requires the Prandtl number to be very
large to ensure that synchronization occurs.  Indeed, for
$\raN \sim 10^7$ this condition indicates that $\Pr \gtrsim 10^{35}$
which would place the Prandtl number in a regime beyond the one that occurs for
the earth's mantle.  In addition, the lower bound on $\mu$ stated here
would imply that $\mu \gtrsim \Pr\raN^2$ which would yield a
very stiff problem numerically, particularly if $\Pr \sim \raN^5$
as indicated.  Gratefully, as seen in the next subsection, these
rigorous estimates are pessimistic, and synchronization is achieved
for much lower values of $Pr$ and $\mu$ than are indicated here.
\end{Rmk}

\subsection{Numerical Results}\label{section:numerical-results-finite} 

The numerical results are similar to those presented in Section
\ref{section:numerical-results-infinite}, i.e. the same spatial discretization and time-stepping algorithm are employed here, but now we also
consider variations in $\prN$.  In addition, for finite $\prN$ the velocity field is no longer slaved directly to the temperature field, requiring us to specify an initial condition for the velocity as well as the temperature.  This is done as in the infinite $\prN$ case by setting the initial state of the assimilated system to a low-order projection of the `true' system.  Simulations at higher Rayleigh numbers are initiated with flow fields from previous simulations at incrementally lower $\raN$.  To ensure that the algorithm is working, we first check that synchronization
still occurs for reasonable choices of $\raN$, $\mu$, and $N$ given a
finite $\prN$, say $\prN = 100$.  See Figure \ref{fig:basic-finite}.

\begin{figure}
\centering
\includegraphics[width=\textwidth]{./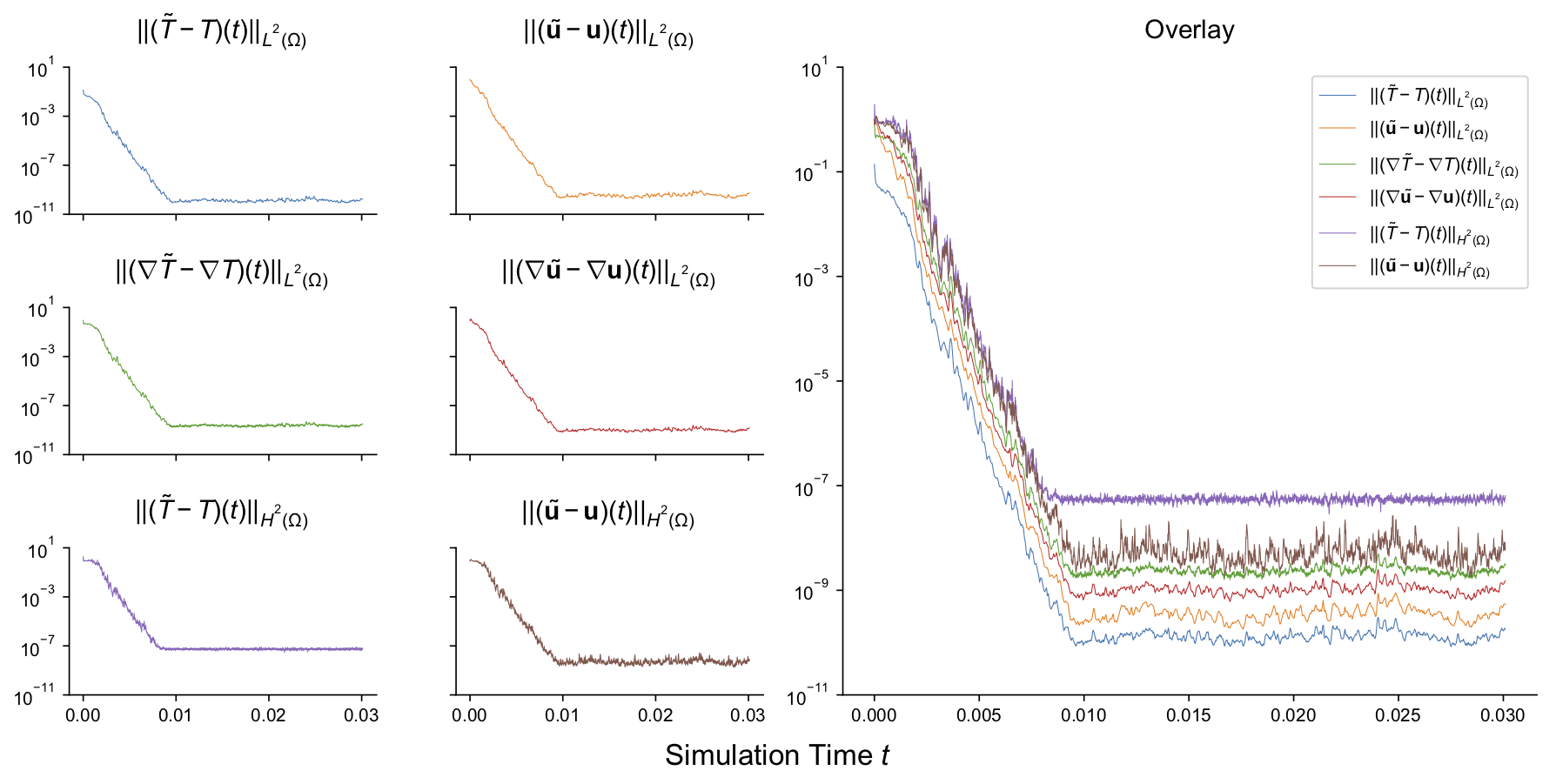}
\caption[Synchronization in various norms for
$\raN \approx 5.22\times 10^7$, $\mu=18,000$, $N=32$, and
$\prN = 100$.]{Synchronization in various norms for
  $\raN \approx 5.22\times 10^7$, $\mu=18,000$, $N=32$, and
  $\prN = 100$.  The temperature and velocity differences still
  decrease exponentially to zero, despite the fact that
  $\prN < \infty$.  However, the convergence is slow compared to the
  $\prN = \infty$ case (see Figure \ref{fig:vary-pr}).}
\label{fig:basic-finite}
\end{figure}

To simplify our exploration, for each value of $\raN$ listed in Tables
\ref{table:mu_values} and \ref{table:N_values}, we pick $\mu$ so that
the systems synchronize quickly.  Then, with $N = 32$, we run
simulations for logarithmically spaced values of $\prN$ from $1$ to
$100$.  Even with these larger-than-necessary choices of $\mu$,
convergence is lost for small enough $\prN$.  See Figure
\ref{fig:vary-pr} for a few additional examples and Table
\ref{table:pr_values} for the chosen $\mu$ and the lowest $\prN$ where
synchronization still occurs.

\begin{figure}
\centering
\begin{subfigure}{.49\textwidth}
    \centering
    \includegraphics[width=\textwidth]{./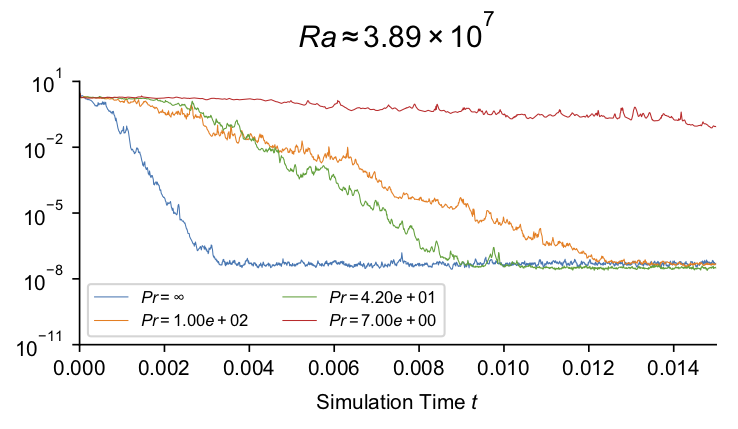}
\end{subfigure}
\begin{subfigure}{.49\textwidth}
    \centering
    \includegraphics[width=\textwidth]{./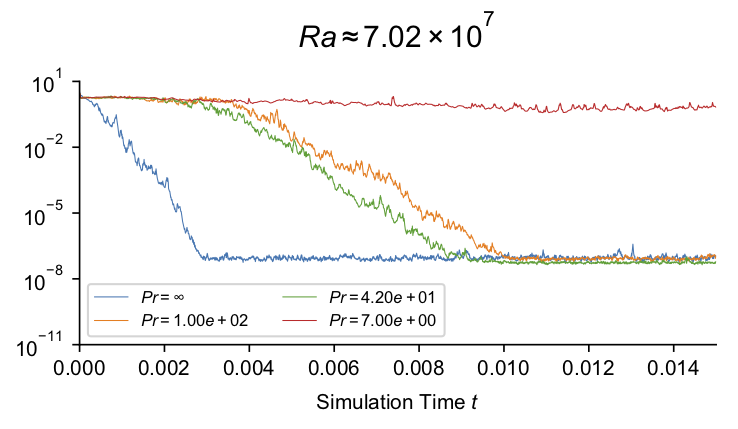}
\end{subfigure}
\caption[Synchronization with $N = 32$ and various $\prN$ for two
different values of $\raN$, with $\mu$ as given in Table
\ref{table:pr_values}.]{$\|(\bfUT-\bfU)(t)\|_{H^2} +
  \|(\T-T)(t)\|_{H^2}$
  with $N = 32$ and various $\prN$ for two different values of $\raN$,
  with $\mu$ as given in Table \ref{table:pr_values}.  The convergence
  is generally slower as $\prN$ decreases, until eventually there is
  no synchronization if $\prN$ is too small.}
\label{fig:vary-pr}
\end{figure}

\begin{table}
\centering
\begin{tabular}{r|r||c}
\multicolumn{1}{c|}{$\raN$} & \multicolumn{1}{c||}{$\mu$} & $\prN$ \\ \hline
$1.1937766 \times 10^7$ & $10,000$ & 75 \\
$1.6037187 \times 10^7$ & $11,000$ & 56 \\
$2.1544346 \times 10^7$ & $12,000$ & 56 \\
$2.8942661 \times 10^7$ & $13,000$ & 100 \\
$3.8881551 \times 10^7$ & $14,000$ & 42 \\
$5.2233450 \times 10^7$ & $18,000$ & 42 \\
$7.0170382 \times 10^7$ & $20,000$ & 42 \\
$9.4266845 \times 10^7$ & $25,000$ & 31 \\
$1.26638017\times 10^8$ & $32,000$ & 31 \\
$1.70125427\times 10^8$ & $40,000$ & 31
\end{tabular}
\caption[Minimal values of $\prN$ that result in synchronization within about 0.01 units of simulation time for the $\raN$ and $\mu$ in Table \ref{table:mu_values}, with $N = 32$.]{Minimal values of $\prN$ that result in synchronization within about 0.01 units of simulation time for the given $\raN$ and $\mu$ with $N = 32$.
The relationship here is much less precise than those described in Tables \ref{table:mu_values} and \ref{table:N_values} (for example, $\raN\approx 2.89 \times 10^7$ is a bit of an outlier), but it is interesting to note that lower $\raN$ seem to require larger $\prN$.
Whether or not this is an effect of $\raN$ increasing or $\mu$ increasing, however, is unclear.
}
\label{table:pr_values}
\end{table}

There are two main points to take away from these results.
\begin{itemize}
\item Finite-but-large Prandtl data assimilation through only
  temperature measurements is possible, though it is slower than the
  infinite Prandtl setting explored in Section
  \ref{section:infinite-prandtl-assimilation}.
\item The relationship between $\raN$, $\mu$, $N$, and $\prN$ remains
  unclear and would require further measurements to precisely
  quantify.  However, the fact that the assimilation works at all with
  reasonably small $\prN$ indicates that the inequality constraints in
  Theorem \ref{thm:finite} are strongly overstated, or that the constants
  in \eqref{pr:large:3d} are extremely small.
  \item While the estimates obtained in Theorem \ref{thm:finite} are clearly pessimistic it is interesting to note that the qualitative behavior is as expected.  That is, the finite Prandtl system will synchronize so long as $\prN$ is sufficiently large.  This is not unexpected as the limit of $\prN\rightarrow 0$ will be dominated by inertial effects in which the temperature and velocity field are nearly decoupled so we would anticipate that temperature observations alone will not suffice to recover the full flow field.
\end{itemize}

\section{A Scenario of Model Error} 
\label{section:hybrid-prandtl-assimilation}

Finally, we address a realistic scenario of assimilating observables that
correspond to solutions of the finite, but large Prandtl number system
\eqref{nd:Bou}, \req{nd:bc} into the ``incorrect," albeit more computationally tractable,
infinite Prandtl data assimilation system
\eqref{system:nudge-infinite}.  Thus, we suppose $(\bfU,T)$ satisfy $(F1)-(F4)$ and simply choose a suitable
$\T(\bfX,0) = \T_0(\bfX)$ for \eqref{system:nudge-infinite} since the corresponding initial velocity is enslaved by the temperature
evolution for the later nudging equation. Before we perform the analysis for the error estimates, we state the well-posedness result corresponding to the nudged equation, whose proof follows along similar lines to that of Theorem \ref{thm:nudge:infinite}.

\begin{Thm}\label{thm:nudge:hybrid}
Let $\mu>0$, and $\{\lambda_n\}_{n=1}^\infty$ be as in \req{eq:eigenpairs}. Let $(\bfU,T)$ of \eqref{nd:Bou}-\req{nd:bc} satisfying $(F1)-(F4)$. Let $\T_0\in L^2(\Om)$ a.e. in $\Om$ such that $\til{T}_0$ is
  a.e. $L$-periodic in $x_1,x_{d-1}$ with
  ${\T}_0|_{x_d=0}=0,{\T}_0|_{x_d=1}=1$ (in the sense of trace). Suppose $N>0$ satisfies
$\frac{1}{4}\lambda_N \ge \mu$.
Then there exists a unique $(\bfUT,\til{T})$ satisfying \req{system:nudge-infinite} in the weak sense  such that
  \begin{align*}
    \bfUT \in L^\infty(0,\tau;\mathcal{W}),
    \quad 
    \T\in L^\infty(0,\tau;L^2(\Om))\cap 
         L^2(0,\tau;H^1(\Om))\cap C_w([0,\tau],L^2(\Om)),
  \end{align*}
  for all $\tau>0$. 
\end{Thm}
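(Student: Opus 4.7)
The plan is to mirror the Galerkin-based proof of Theorem \ref{thm:nudge:infinite}, but to work throughout with the difference variable $S:=\T-T$, since $T$ is prescribed by $(F1)$--$(F4)$ and $S$ enjoys homogeneous mixed boundary conditions. Concretely, for each $m\geq1$ I would expand $S^m$ in the eigenbasis $\{\phi_n\}$ of \eqref{eq:eigenpairs}, solve the resulting ODE system for the Fourier coefficients of $S^m$, and at each time recover $\bfUT^m$ from $\T^m = T + S^m$ via the instantaneous Stokes problem $-\Delta\bfUT^m+\nabla\tilde{p}^m=\raN\bfe_d\T^m$, $\nabla\cdot\bfUT^m=0$, whose solvability and $H^2$-bound are furnished by Lemma \ref{lem:stokes}.

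The core estimate comes from testing the $S^m$-equation against $S^m$, which produces
\begin{align*}
\frac{1}{2}\frac{d}{dt}\|S^m\|^2+\|\nabla S^m\|^2+\mu\|P_N S^m\|^2=-\langle\bfW^m\cdot\nabla T,S^m\rangle,
\end{align*}
where $\bfW^m=\bfUT^m-\bfU$. The spectral-gap hypothesis $\tfrac{1}{4}\lambda_N\geq\mu$, applied exactly as in \eqref{2d:poincare}, gives $\|\nabla S^m\|^2+\mu\|P_N S^m\|^2\geq\tfrac{1}{2}\|\nabla S^m\|^2+\mu\|S^m\|^2$. For the nonlinear term I would integrate by parts using $\nabla\cdot\bfW^m=0$ together with the boundary conditions on $\bfW^m$ and $S^m$, then bound $|\langle\bfW^m\cdot\nabla S^m,T\rangle|\leq\|\bfW^m\|_{L^\infty}\|T\|\|\nabla S^m\|$ via $H^2\hookrightarrow L^\infty$; Lemma \ref{lem:stokes} supplies $\|\bfUT^m\|_{H^2}\leq C\raN(\|S^m\|+\|T\|)$, while $(F3)$ controls $\|\bfU\|_{H^2}$ and $(F4)$ controls $\|T\|$. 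After Young's inequality absorbs a quarter of $\|\nabla S^m\|^2$ into the left-hand side, a differential Gronwall inequality yields a uniform-in-$m$ bound
\begin{align*}
\sup_{t\in[0,\tau]}\|S^m(t)\|^2+\int_0^\tau\|\nabla S^m(s)\|^2\,ds\leq C(\tau,\raN,\Om,\|\T_0\|,\|T_0\|),
\end{align*}
on every finite interval $[0,\tau]$. Lemma \ref{lem:stokes} then promotes this to a uniform bound on $\bfUT^m$ in $L^\infty(0,\tau;\mathcal{W})$, and returning to the $S^m$-equation yields $\partial_t S^m$ uniformly in $L^{4/3}(0,\tau;V')$.

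With these a priori bounds, Aubin--Lions compactness extracts a subsequence converging to a limit $(\bfUT,\T)$ in the topologies needed to pass to the limit in every term, including the transport nonlinearity; linearity of the Stokes map $\T\mapsto\bfUT$ makes the velocity coupling trivial to pass to the limit. Weak continuity $\T\in C_w([0,\tau];L^2(\Om))$ follows by a standard Lions--Magenes argument, and uniqueness is obtained by subtracting two solutions with the same data and running essentially the same energy estimate to close a Gronwall inequality on their difference. The main obstacle, as in Theorem \ref{thm:nudge:infinite}, is the control of the transport term $\bfW^m\cdot\nabla T$: the novelty here is that $T$ is now only a regular solution of the nonlinear finite-Prandtl Boussinesq system rather than a solution slaved to the infinite-Prandtl transport, so one must lean on the deterministic eventual-regularity bounds of $(F4)$ to keep the estimates closed. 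This is a bookkeeping matter rather than a conceptual one, so the overall structure of the proof carries over without substantive modification.
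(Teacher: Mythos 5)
Your proposal is correct and matches the route the paper intends: the paper does not write out a proof of this theorem but states that it ``follows along similar lines'' to Theorem \ref{thm:nudge:infinite}, i.e.\ a Galerkin construction for the difference variable $S=\T-T$ (which has homogeneous boundary data), with the velocity slaved through Lemma \ref{lem:stokes}, the energy estimate of \eqref{2d:poincare}-type using $\tfrac14\lambda_N\geq\mu$, the bounds $(F3)$--$(F4)$ to control the transport term $\bfW\cdot\nabla T$, and standard Aubin--Lions compactness plus an energy argument for uniqueness. Your minor simplification of bounding $\|\bfUT^m\|_{H^2}\leq C\raN\|\T^m\|$ directly from the nudged Stokes equation and $\|\bfU\|_{H^2}$ separately via $(F3)$, rather than through the difference momentum equation \eqref{eqn:w:hybrid} with its $\prN^{-1}$ forcing, is legitimate and does not change the substance of the argument.
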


\subsection{Error estimates} 

Since our data assimilation equation \eqref{system:nudge-infinite}
does not correspond to the true evolution of the observables
\eqref{nd:Bou}, \req{nd:bc}, we do not expect to obtain an exact
synchronization.  Instead, we derive estimates that quantify the
maximal error possible, and which will vanish as the
Prandtl number is taken increasingly large.

\begin{Thm}\label{thm:approx:infty}
Let $N,\mu>0$ satisfy $\frac{1}4\lam_N\geq \mu$ and $\til{T}_0$ be given as in Theorem \ref{thm:nudge:hybrid}. Let $(\bfUT,\T)$ be the corresponding unique solution to \req{system:nudge-infinite} guaranteed by Theorem \ref{thm:nudge:hybrid}. There exists a constant $C_0=C_0(\Om)>0$ such
  that if
  \begin{align}\label{approx:hyb:hyp} 
    \mu\geq4C_0\left(\raN^8+\raN^2\right),
  \end{align}
  then there exist positive
  constants $C_1=\|\T_0-T_0\|,C_2=C_2(\Om), C_3=C_3(\Om)$ such that
  \begin{align}\label{error:hyb:infty}
    \raN^{-1}\|\bfUT(t)-\bfU(s)\|_{H^2}+\|\T(t)-T(t)\|
    \leq C_1e^{-(\mu/2) t}+ \frac{C_2}{\prN}\frac{\left(\raN^{7/2}+\raN\right)}{\mu^{1/2}}+\frac{C_3}{\prN}\left(\raN^{5/2}+\raN^{7/4}\right),
  \end{align}
 for all $t\geq0$.
\end{Thm}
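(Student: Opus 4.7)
The plan is to follow the strategy of Theorem~\ref{thm:infty}, adapted to account for the model mismatch between the finite-$\prN$ truth and the infinite-$\prN$ data-assimilation system. The key new feature is that the momentum difference equation acquires an additional $O(1/\prN)$ forcing term coming from the inertial terms in the finite-$\prN$ momentum equation. Because this forcing persists in time, it is the source of the non-vanishing contributions on the right of \req{error:hyb:infty}.

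First, I would form the difference variables $\bfW = \bfUT - \bfU$, $S = \T - T$, and $q = \tilde p - p$. Subtracting \req{system:nudge-infinite} from \req{nd:Bou} produces the Stokes-type elliptic system
\begin{equation*}
-\Delta \bfW + \nabla q = \raN\,\bfe_d\, S - \bff, \qquad \nabla\cdot\bfW = 0,\qquad \bff := \frac{1}{\prN}\bigl[\partial_t \bfU + (\bfU\cdot\nabla)\bfU\bigr],
\end{equation*}
with the usual homogeneous Dirichlet and horizontally periodic boundary conditions. The temperature difference $S$ satisfies the same transport-diffusion equation (with nudging) as in the proof of Theorem~\ref{thm:infty}, since the temperature evolution is unchanged in going from finite to infinite Prandtl number.

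Second, Lemma~\ref{lem:stokes} applied to the Stokes system for $\bfW$ yields
\begin{equation*}
\|\bfW\|_{H^2} \leq C\raN\,\|S\| + C\|\bff\|.
\end{equation*}
The forcing term $\|\bff\|$ is controlled via $(F3)$: the bound $\|\partial_t\bfU\| \leq \kap_3\raN^{7/2}$ is immediate, while $\|(\bfU\cdot\nabla)\bfU\|$ is estimated by Sobolev interpolation---for instance the 3D Agmon-type inequality $\|\bfU\|_{L^\infty} \leq C\|\bfU\|_{H^1}^{1/2}\|\bfU\|_{H^2}^{1/2}$ combined with the $H^1$ and $H^2$ bounds of $(F3)$. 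This yields a bound on $\|\bff\|$ consisting of polynomial powers of $\raN$ scaled by $1/\prN$; upon dividing by $\raN$, one obtains the third (time-independent) term in \req{error:hyb:infty}.

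Third, I would perform the same $L^2$ energy estimate on the $S$ equation as in the proof of Theorem~\ref{thm:infty}: test against $S$, integrate by parts using $\nabla\cdot\bfUT = 0$, and invoke the Poincar\'e-type computation \req{2d:poincare} together with the hypothesis $\tfrac14\lambda_N \geq \mu$ to replace $\|\nabla S\|^2 + 2\mu\|P_N S\|^2$ by $2\mu\|S\|^2$ plus leftover dissipation. The cross term $\langle \bfW\cdot\nabla T, S\rangle$ is bounded by integrating by parts (using $\nabla\cdot\bfW = 0$) and then applying $H^2 \hookrightarrow L^\infty$ along with the uniform bound $\|T\| \leq \kap_0'$ from $(F4)$. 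Combining with the Stokes estimate from Step~2 produces a differential inequality of the form
\begin{equation*}
\frac{d}{dt}\|S\|^2 + \bigl(2\mu - C\raN^{\alpha}\bigr)\|S\|^2 \leq \frac{C}{\prN^2}\,P(\raN),
\end{equation*}
where $P(\raN)$ collects the polynomial powers of $\raN$ arising from $\|\bff\|^2$. The hypothesis \req{approx:hyb:hyp} is invoked here to ensure that the coefficient on the left dominates by at least $\mu$, whence by Gronwall
\begin{equation*}
\|S(t)\|^2 \leq \|\T_0 - T_0\|^2\,e^{-\mu t} + \frac{C\,P(\raN)}{\mu\,\prN^2}.
\end{equation*}
Taking square roots and using $\sqrt{a+b}\leq \sqrt a + \sqrt b$ recovers the $\|S\|$-contribution in \req{error:hyb:infty}. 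Adding this back to the Stokes bound from Step~2 assembles the full estimate.

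\textbf{Main obstacle.} The Gronwall step is essentially routine once the forcing $\bff$ is in hand; it parallels Theorem~\ref{thm:infty} with an added inhomogeneous source. The delicate task is obtaining the precise exponents of $\raN$ displayed in \req{error:hyb:infty}. This requires careful use of Sobolev interpolation (Agmon and Gagliardo--Nirenberg) when bounding the inertial term $\|(\bfU\cdot\nabla)\bfU\|$ against the high-order regularity bounds in $(F3)$--$(F4)$, and it is at this stage that the comparatively stringent hypothesis $\mu \geq 4C_0(\raN^8+\raN^2)$ is used to dominate the polynomial growth in $\raN$ on the right-hand side of the Gronwall inequality.
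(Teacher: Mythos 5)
Your proposal is correct and would deliver the stated estimate, but it takes a genuinely different route from the paper at the key energy step. The paper does \emph{not} slave $\bfW$ to $S$ inside the temperature estimate; instead it tests the momentum difference equation against $\bfW$ and the temperature difference equation against $S$, adds the two relations so that $\|\del\bfW\|^2$ appears as dissipation, and bounds the cross term by $\|\bfW\|_{L^3}\|\del T\|_{L^6}\|S\|$, which forces it to invoke the $H^2$ bound $\Sob{T}{H^2}\lesssim \raN^8$ from $(F4)$ --- this is precisely why the hypothesis \req{approx:hyb:hyp} carries the $\raN^8$ term. Your route instead mirrors Theorem \ref{thm:infty}: integrate the cross term by parts, use $\|\bfW\|_{L^\infty}\lesssim\|\bfW\|_{H^2}\lesssim \raN\|S\|+\|\bff\|$ from Lemma \ref{lem:stokes}, and use only $\|T\|\leq\kap_0'$, so the absorption into $\mu\|S\|^2$ requires only $\mu\gtrsim\raN^2$; the $\raN^8$ part of \req{approx:hyb:hyp} is then never actually needed (your closing remark that it is used to dominate the Gronwall right-hand side is inaccurate --- the forcing $\|\bff\|^2\lesssim \prN^{-2}\raN^7$ simply enters divided by $\mu$), but since the stated hypothesis is stronger than what you use, the theorem as stated still follows. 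The two approaches produce the same exponents: your Agmon bound $\|(\bfU\cdot\del)\bfU\|\lesssim\raN^{11/4}$ gives exactly the $\prN^{-1}(\raN^{5/2}+\raN^{7/4})$ term after the final Stokes step, and the Gronwall tail $\raN^{7/2}/(\mu^{1/2}\prN)$ matches the second term, so up to the usual bookkeeping of constants (e.g.\ the Stokes constant multiplying $\|S\|$, an imprecision the paper shares) your argument is complete; in fact it yields the conclusion under a weaker condition on $\mu$ than the paper's proof.
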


\begin{proof}
Let $S=\tilde{T}-T$, $\bfW=\bfUT-\bfU$, $q=\til{p}-p$. Then
\begin{align}
	-\Delta \bfW+\del q &=\raN \bfe_3 S+\frac{1}{\prN}[\bdy_t\bfU+(\bfU\cdotp\del)\bfU],\label{eqn:w:hybrid}\\
	  \bdy_tS+\bfW\cdotp\del S+\bfU\cdotp\del S
    &=-\De S-\mu P_NS-\bfW\cdotp\del T\label{eqn:S:hybrid}
\end{align}
Upon taking the $L^2$-inner product of $\bfW, S$ with \req{eqn:w:hybrid}, \req{eqn:S:hybrid}, respectively, then adding the consequent relations, we obtain
	\begin{align}
		\frac{1}2\frac{d}{dt}\Sob{S}{}^2&+\Sob{\del S}{}^2+\mu\Sob{S}{}^2+\Sob{\del\bfW}{}^2\notag\\
		&=\mu\Sob{Q_NS}{}^2-\lb\bfW\cdotp\del T,S\rb+\raN\lb e_3S,\bfW\rb+ \frac{1}{\prN}\lb\bdy_t\bfU+(\bfU\cdotp\del)\bfU,\bfW\rb\notag\notag\\
		&=I+II+III+IV\notag.
	\end{align}
Upon applying  H\"older's inequality, the Poincar\'e inequality, the Sobolev embedding theorem, Ladyzhenskaya's inequality, and Theorem \ref{prop:3d:Bou} $(ii)$, we derive
	\begin{align}
		|I|&\leq\frac{\mu}{\lam_N}\Sob{\del S}{}^2\notag\\
		|II|&\leq\Sob{\bfW}{L^3}\Sob{\del T}{L^6}\Sob{S}{}\notag\\
			&\leq C\kap_2'\raN^8\Sob{S}{}^2+\frac{1}{8}\Sob{\del\bfW}{}^2\notag\\
		|III|&\leq\raN\Sob{S}{}\Sob{\bfW}{}\notag\\
			&\leq C\raN^2\Sob{S}{}^2+\frac{1}{8}\Sob{\del\bfW}{}^2\notag\\
		|IV|&\leq \frac{1}{\prN}\left(\Sob{\bdy_t\bfU}{}\Sob{\bfW}{}+\Sob{\bfU}{L^4}^2\Sob{\del\bfW}{}\right)\notag\\
		&\leq\frac{C}{\prN}\left(\raN^{7/2}\Sob{\bfW}{}+\raN^{2}\Sob{\del\bfW}{}\right)\notag\\
		&\leq \frac{C}{\prN^2}(\raN^7+\raN^{2})+\frac{1}{8}\Sob{\del\bfW}{}^2\notag.
	\end{align}
Upon combining estimates for $I-IV$  with the conditions in \req{approx:hyb:hyp} and the fact that $\frac{1}4\lam_N\geq\mu$, it follows that
	\begin{align}\notag
	\frac{d}{dt}\Sob{S}{}^2+\mu\Sob{S}{}^2+\Sob{\del\bfW}{}^2\leq \frac{C}{\prN^2}(\raN^7+\raN^{2}).
	\end{align}
Hence, by Gronwall's inequality, we arrive at
	\begin{align}\label{hyb:S}
		\Sob{S(t)}{}^2\leq e^{-\mu t}\Sob{S_0}{}^2+\mu^{-1}\frac{C}{\prN^2}(\raN^7+\raN^{2})(1-e^{-\mu t}).
	\end{align}
Lastly, by Lemma \ref{lem:stokes} and $(F3)$ of the Standing Hypotheses, we have
	\begin{align}\label{hyb:w}
		\raN^{-1}\Sob{\bfW(t)}{H^2}\leq \Sob{S(t)}{}+\frac{C}{\prN}\left(\raN^{5/2}+\raN^{7/4}\right).
	\end{align}
We take the square root of \req{hyb:S} and add the result to \req{hyb:w} to complete the proof.

\end{proof}

\subsection{Numerical Results}\label{section:numerical-results-hybrid} 

To numerically verify Theorem \ref{thm:approx:infty} in a way that is
consistent with the numerical simulations corresponding to Theorems
\ref{thm:infty} and \ref{thm:finite}, we compare $(\bfUT,\T)$ to
$(\bfU,T)$.  To begin, consider a simulation with
$\raN\approx 5.22\times 10^7$, $\mu=18,000$ $N=32$, and $\prN = 100$.
For the finite Prandtl model in Section
\ref{section:finite-prandtl-assimilation}, this set of parameters
results in synchronization (see Figure \ref{fig:basic-finite}).  In
this situation, however, the synchronization appears to be limited by
the $O(\Pr^{-1})$ error from the $\Pr=\infty$ model to the
$\Pr< \infty$ reality.

\begin{figure}
\centering
\includegraphics[width=\textwidth]{./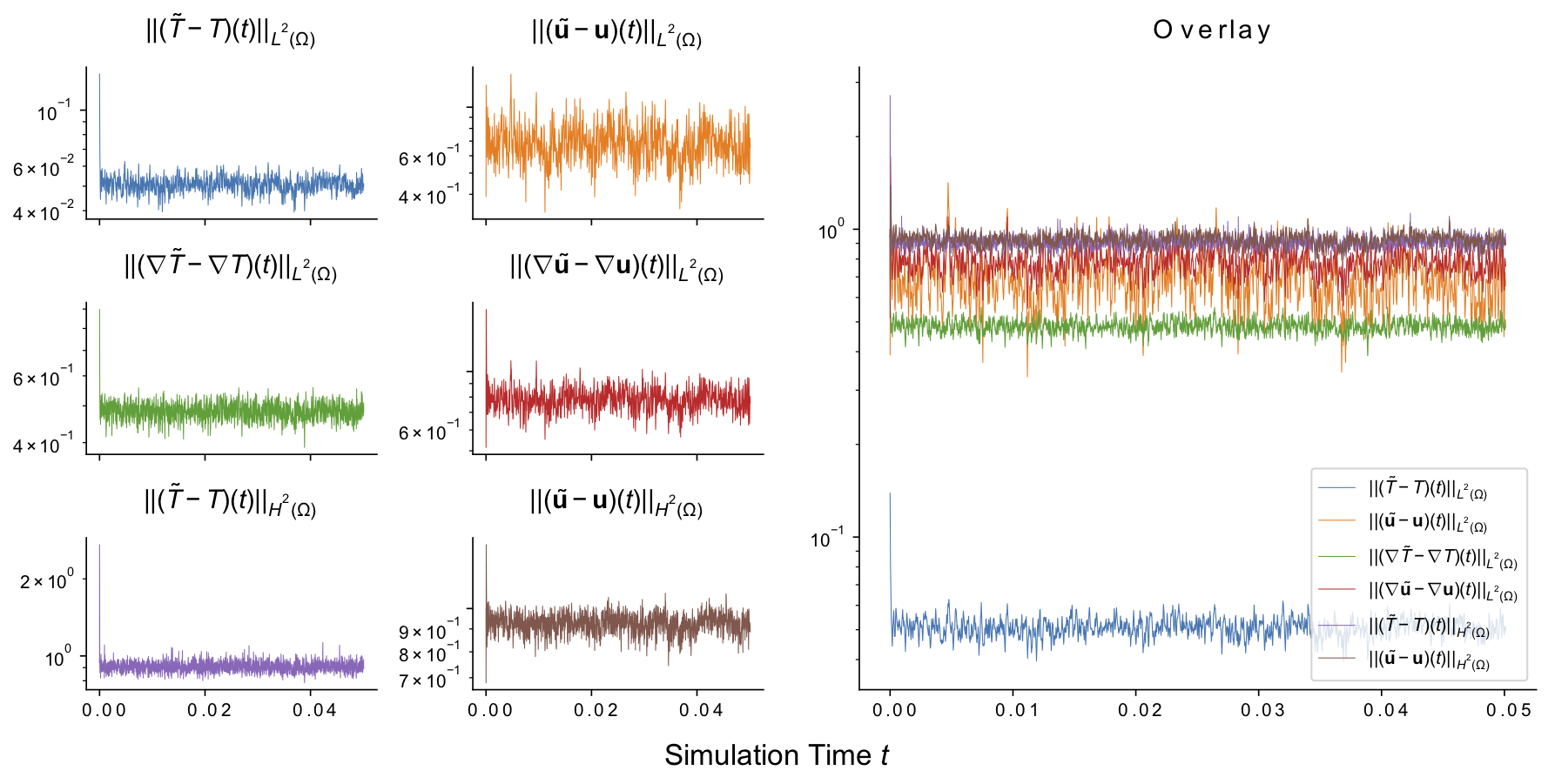}
\caption{Hybrid assimilation with $\raN\approx 5.22\times 10^7$,
  $\mu=18,000$, $N=32$, and $\prN = 100$.  The temperature and
  velocity differences remain almost constant, with no hint of
  convergence.  Note the difference in vertical axes relative to
  Figures exhibited previously in Sections
  \ref{section:numerical-results-infinite},
  \ref{section:numerical-results-finite}, above.}
\label{fig:basic-hybrid}
\end{figure}

This apparent lack of convergence is expected, however, since Theorem
\ref{thm:approx:infty} only guarantees that the error between $(\bfUT,\T)$
and $(\bfU,T)$ decreases to $O(\prN^{-1})$ as time
increases.  Using the same set of parameters, but with larger and
larger $\prN$, results in tighter and tighter synchronization.  To
more carefully match the results to the statement of Theorem
\ref{thm:approx:infty}, we calculate
$\raN^{-1}\|(\bfUT - \bfU)(t)\|_{H^2}$ +
$\|(\T - T)(t)\|$ at each simulation time $t$.  See
Figure \ref{fig:hybrid-works}.

\begin{figure}
\centering
\begin{subfigure}{.49\textwidth}
    \centering
    \includegraphics[width=\textwidth]{./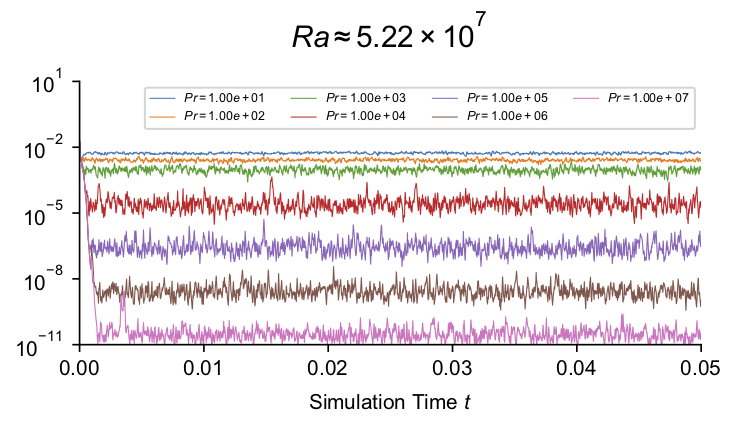}
\end{subfigure}
\begin{subfigure}{.49\textwidth}
    \centering
    \includegraphics[width=\textwidth]{./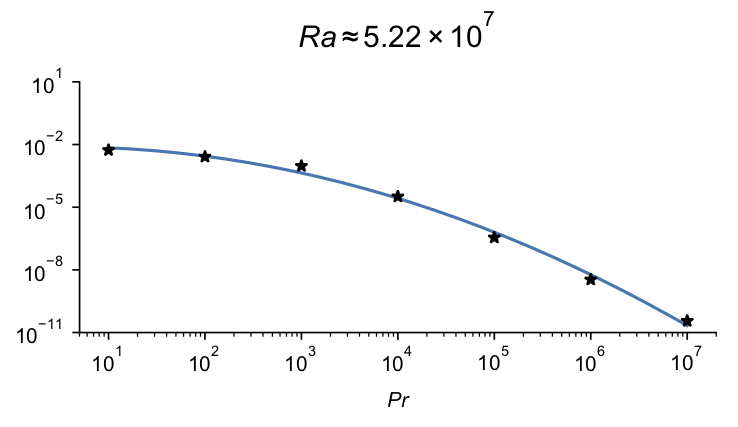}
\end{subfigure}
\caption{$\raN^{-1}\|(\bfUT - \bfU)(t)\|_{H^2}$ +
  $\|(\T - T)(t)\|$ with $\raN\approx 5.22\times 10^7$, $\mu=18,000$
  $N=32$, and various $\prN$.  As $\prN$ increases, the minimal error
  between $(\bfUT,\T)$ and $(\bfU,T)$ decreases.  The plot on the
  right describes the relationship between $\prN$ and the error with a
  quadratic least-squares fit.}
\label{fig:hybrid-works}
\end{figure}

While Figure \ref{fig:hybrid-works} is encouraging, it also highlights
a weakness in the data assimilation scheme unique to this hybrid
setting.  In both the previous settings considered here, the rigorous
estimates were pessimistic particularly for the large but finite
Prandtl case.  It appears that the estimates provided here for this
hybrid setting are far more true to practice, i.e. while the error is
not linear in $\Pr$ as seen in Figure \ref{fig:hybrid-works} it is
certainly dominated by the effects of the Prandtl number.  This
suggests that the practical success of these types of data
assimilation schemes is highly dependent on the data coming from the
same model as the simulated system, an unrealistically stringent
restriction.  Recall that the Boussinesq approximation is effectively
a ``zeroth order'' approximation for the mantle, meaning the true
physical system has several complicated secondary effects (some of
which are unknown) that are not included in the model.  The lack of
numerical synchronization in such a simple setting suggests that data
assimilation may not be as adaptable to settings where the exact model
is not known.

\section{Conclusions and Outlook} 
\label{section:conclusion}

In Section \ref{section:infinite-prandtl-assimilation}, we examined a
data assimilation scheme for the Rayleigh-B\'enard system with
$\prN = \infty$ and showed rigorously that synchronization occurs
between the data and assimilating equations under certain conditions
on the relaxation parameter $\mu$ and the number of projected modes
$N$ relative to the Rayleigh number $\raN$ when measurements of the temperature only are observed.  That is, as long as there
is enough data (i.e., $N$ is not too small), $\mu$ can be chosen large
enough to guarantee synchronization.  Though this is a satisfying
theoretical result, the numerical experiments in Section
\ref{section:numerical-results-infinite} show that synchronization
often occurs under much weaker conditions on $\mu$ and $N$ than
Theorem \ref{thm:infty} requires.  In particular, the inequality
conditions on $\mu$ is shown to be at least an order of magnitude away
from being sharp.  In addition, the numerical results also demonstrate
situations in which synchronization fails, namely when $\mu$ and/or
$N$ are not large enough.

Section \ref{section:finite-prandtl-assimilation} shows that
synchronization in the temperature measurements only and at
finite $\prN$ is also possible, although the rate of convergence is
slower than with $\prN = \infty$ and the relationship between $\raN$,
$\prN$, $\mu$, and $N$ needed to achieve synchronization remains
somewhat ambiguous.  As in the $\prN = \infty$ case, the conditions
imposed on $\mu$ appear quite pessimistic when compared to numerical
experiments.

Finally, when the true values are taken from $\prN < \infty$
simulations, but the assimilating equations use $\prN = \infty$, the
synchronization is highly dependent on $\prN$, as predicted by the
rigorous bounds.  This hybrid setting illustrates that the difference
between the two systems is dominating the error, rather than the
dynamical error in the synchronization process.  Although the
numerical simulations agree well with the rigorous predictions in this
setting, they do indicate a pessimistic outlook for additional
settings wherein the exact evolution of the dynamics for a data
assimilation system of this type is unknown.  In particular as noted
above, we have omitted several details in our model of mantle
convection that play a vital role in the evolution and may have an
effect similar to the difference between finite and infinite $\Pr$.
To investigate this further, data assimilation applied to the
internally heated convective setting (see
\cite{Go2016,WhDo2011b,WhDo2012} for example), and possibly the
anelastic or compressible convective systems \cite{Kingetal2010} will
be explored.

The current consideration of the difference between the infinite and
near-infinite Prandtl number convective systems lends itself to
further investigations wherein the assimilating model is different
from the physical system wherefrom the observations are obtained.  For
example, one might consider the effects of imprecisely defined
boundary conditions, i.e. what if the observations were obtained from
a convective simulation in which the velocity satisfied a Navier-slip
condition, but the nudged system was modeled with a no-slip condition?
Other variations in the model itself might include slight variations
in the geometry between the two systems, and additional terms in the
equations themselves such as internal heating mentioned above.  The
rub of the matter is that data assimilation techniques, if they are
meant to apply to physical settings such as weather, climate, and
investigations of the earth's mantle, must consider the fallibility of
the model they are relying on, that is do variations in the underlying
model itself allow for synchronization of the model with the observed
truth?

\section*{Acknowledgments} 

This paper was initiated by conversations between AF and NEGH (and
eventually JPW) which took place during a workshop at the Institute
for Pure and Applied Mathematics (IPAM), for which all the authors are
indebted to IPAM, the National Science Foundation (NSF), and
organizers of the program on Mathematics of Turbulence in Fall
2014. NEGH would like to acknowledge the grants NSF-DMS-1816551,
NSF-DMS-1313272 and Simons Foundation 515990 which supported this
work. SM and JPW acknowledge computational resources and support from
the Fulton Supercomputing Laboratory at Brigham Young University where
all the simulations were performed, as well as extensive feedback from
the dedalus developers (see \url{http://dedalus-project.org/}).  JPW
also acknowledges the hospitality and generosity of the Mathematics
Department at Tulane University where some of the finishing touches on
this work were carried out.

\addcontentsline{toc}{section}{References}

\begin{footnotesize}
\bibliographystyle{alpha}
\bibliography{bib}
\end{footnotesize}

\newpage

\vspace{.3in}
\begin{multicols}{2}
\noindent
Aseel Farhat\\
{\footnotesize Department of Mathematics\\
Florida State University\\
Web: \url{https://aseelfarhat842.wixsite.com/af7py}\\
Email: \href{mailto:afarhat@fsu.edu}{\nolinkurl{afarhat@fsu.edu}}} \\[.25cm]

\noindent 
Nathan Glatt-Holtz\\ {\footnotesize
Department of Mathematics\\
Tulane University\\
Web: \url{http://www.math.tulane.edu/~negh/}\\
Email: \href{mailto:negh@tulane.edu}{\nolinkurl{negh@tulane.edu}}} \\[.2cm]

 \noindent Vincent Martinez\\
{\footnotesize
Department of Mathematics and Statistics\\
Hunter College, CUNY\\
Web: \url{http://wpmin.hunter.cuny.edu/details/6494}\\
Email: \href{mailto:vincent.martinez@hunter.cuny.edu}{\nolinkurl{vincent.martinez@hunter.cuny.edu}}} \\[.2cm]
\columnbreak

\noindent Shane McQuarrie\\
{\footnotesize
  The Institute for Computational Engineering Sciences\\
  University of Texas, Austin\\
  Email: \href{mailto:shanemcq@utexas.edu}{\nolinkurl{shanemcq@utexas.edu}}} \\[.2cm]

 \noindent Jared Whitehead\\
{\footnotesize
Mathematics Department\\
Brigham Young University\\
Web: \url{https://math.byu.edu/~whitehead/}\\
Email: \href{mailto:whitehead@mathematics.byu.edu}{\nolinkurl{whitehead@mathematics.byu.edu}}} \\[.2cm]
 \end{multicols}

\end{document}